\newtheorem{observation}[theorem]{Observation}
\definecolor{darkgray}{rgb}{0.25,0.25,0.25}
\definecolor{Darkgray}{rgb}{0.15,0.15,0.15}
\newcommand{\Alabel}[1]{\labelformat{myalgorithm}{Alg.~{\arabic{myalgorithm}}}\refstepcounter{myalgorithm}{\mbox{}\hfill{\small{}\color{darkgray}Alg.\,\small\arabic{myalgorithm}}}{\label{#1}}}
\newcommand{\algfont}{\rm}
\newcommand{\tab}{\hspace*{0.25in}}
\newcounter{myline}
\newcounter{myalgorithm}
\newenvironment{alg}{
  \medskip
  \par
  \algfont
    \noindent
      ~\begin{tabular}{|l|}\hline
      \begin{minipage}{0.98\textwidth}\raggedright
        \begin{list}{\arabic{myline}.}{
            \usecounter{myline}
            \setlength{\listparindent}{0in}
            \setlength{\topsep}{0in}
            \setlength{\itemsep}{0in}
            \setlength{\parsep}{0in}
            \setlength{\rightmargin}{0in}
            \setlength{\itemindent}{0in}
            \setlength{\labelsep}{0.035in}
            \setlength{\leftmargin}{0.215in}
          }
        \vspace*{0.06in}
          }{
        \vspace*{0.06in}
        \end{list}
      \end{minipage}\\\hline
    \end{tabular}
    \medskip
    \par
    \noindent
}
\newenvironment{alg1}{
  \medskip
  \par
  \algfont
    \noindent
      ~\begin{tabular}{|l|}\hline
      \begin{minipage}{0.47\textwidth}\raggedright
        \begin{list}{\arabic{myline}.}{
            \usecounter{myline}
            \setlength{\listparindent}{0in}
            \setlength{\topsep}{0in}
            \setlength{\itemsep}{0in}
            \setlength{\parsep}{0in}
            \setlength{\rightmargin}{0in}
            \setlength{\itemindent}{0in}
            \setlength{\labelsep}{0.03in}
            \setlength{\leftmargin}{0.05in}
          }
        \vspace*{0.06in}
          }{
        \vspace*{0.06in}
        \end{list}
      \end{minipage}\\\hline
    \end{tabular}
    \medskip
    \par
    \noindent
}
\newcommand{\A}{\item}
\newcommand{\Anonum}{\item[]}
\newcommand{\Along}[1]{\item\parbox[t]{0.95\linewidth}{ #1}}
\newcommand{\Ahead}[1]{\item[]\hspace*{-\leftmargin}{#1}}
\newcommand{\Ain}[1]{\item[]\hspace*{-\leftmargin}{~\textrm{\small{input:} #1}}}
\newcommand{\Aout}[1]{\item[]\hspace*{-\leftmargin}{~\textrm{\small{output:} #1}}}
\newcommand{\algbeg}{%
  \addtolength{\labelsep}{0.15in}
  \addtolength{\itemindent}{0.15in}
  \addtolength{\listparindent}{0.15in}
  \addtolength{\linewidth}{-0.15in}
}
\newcommand{\algend}{%
  \addtolength{\labelsep}{-0.15in}
  \addtolength{\itemindent}{-0.15in}
  \addtolength{\linewidth}{0.15in}
}
\newcommand{\comment}[1]{\hfill \raisebox{0pt}{$\ldots$}\hspace{2pt}{\em  #1}}
\newcommand{\mymathfnnolimits}[1]{\ensuremath{{\mathop {\operator@font\sf #1}\nolimits }}}
\newcommand{\mymathfn}[1]{\ensuremath{\mathop {\operator@font\sf #1}}}
\newcommand{\eps}{\varepsilon}
\newcommand{\tran}{^{\!\scriptscriptstyle \sf T}}
\newcommand{\Vars}{\mymathfnnolimits{{vars}}}
\newcommand{\vars}{\Vars}
\newcommand{\Cons}{\mymathfnnolimits{{cons}}}
\newcommand{\cons}{\Cons}
\newcommand{\calA}{\mathcal{A}}
\newcommand{\calC}{\mathcal{C}}
\newcommand{\calD}{\mathcal{D}}
\newcommand{\calH}{\mathcal{H}}
\newcommand{\calK}{\mathcal{K}}
\newcommand{\calR}{\mathcal{R}}
\newcommand{\calS}{\mathcal{S}}
\newcommand{\calT}{\mathcal{T}}
\newcommand{\R}{{\sf R\hspace*{-1.5ex}\rule{0.15ex}{1.6ex}\hspace*{1.2ex}}}
\renewcommand{\R}{{\sf I\!R}}
\newcommand{\Rp}{\R_{\scriptscriptstyle +}}
\newcommand{\Z}{{\sf \lefteqn{\sf Z}\,Z}}
\newcommand{\Zp}{\Z_{\scriptscriptstyle +}}
\newcommand{\mathspread}{}
\newcommand{\Spread}{\renewcommand{\mathspread}{\,}}
\newcommand{\SPREAD}{\renewcommand{\mathspread}{\,\,\,}}
\newcommand{\dospread}[1]{{\mathspread{#1}\mathspread}}
\newcommand{\Ge}{\dospread{\ge}}
\newcommand{\algsep}{
\vspace*{-3.5pt}\hspace*{-1.62em}\rule{1.039\linewidth}{.2pt}}
\newcommand{\stepsize}{\mymathfnnolimits{\sf stepsize}}
\newcommand{\step}{\mymathfnnolimits{\sf Step}}
\newcommand{\pack}{\mymathfnnolimits{\sf IncreasePackingVar}}
\newcommand{\packstar}{\mymathfnnolimits{\sf IncreaseStar}}
\newcommand{\packcomponent}{\mymathfnnolimits{\sf IncreaseComponent}}
\newcommand{\heads}{\mymathfnnolimits{\sf heads}}
\newcommand{\tails}{\mymathfnnolimits{\sf tails}}
\newcommand{\ratio}{\ensuremath{\delta}\xspace}
\newcommand{\prob}[1]{{\sc{#1}}\xspace}
\newcommand{\matching}{\prob{Max Weighted Matching}}
\newcommand{\wvc}{\prob{Weighted Vertex Cover}}
\newcommand{\bmatching}{\prob{Max Weighted} c-\prob{Matching}}
\newcommand{\packing}{\prob{Fractional Packing}}
\newcommand{\distance}{\mymathfnnolimits{\sf distance}}
\renewcommand{\paragraph}[1]{\medskip\par\noindent{\bf #1}~}
\newcommand{\Cite}[1]{{\,\scriptsize\cite{#1}}}
\begin{document}
\title{Distributed Algorithms for Covering, Packing and Maximum Weighted Matching}

\author{
Christos Koufogiannakis\and
Neal E.\ Young
}

\institute{C. Koufogiannakis
	\and
	N.~E.~Young\at
		Department of Computer Science and Engineering\\
		University of California, Riverside
}

\date{\tt Koufogiannakis, C., Young, N.E. Distributed algorithms for covering, packing and maximum weighted matching. Distributed Computing 24, 45--63 (2011). {https://doi.org/10.1007/s00446-011-0127-7}}

\maketitle

\begin{abstract}
\begin{samepage}
This paper gives poly-logarithmic-round, distributed 
$\ratio$\hyp{approximation} algorithms 
for covering problems with submodular cost
and monotone covering constraints
(\prob{Submodular-cost Covering}).  
The approximation ratio $\ratio$ is the maximum number of variables in any constraint.
Special cases include \prob{Covering Mixed Integer Linear Programs (CMIP)},
and \prob{Weighted Vertex Cover} (with $\ratio=2$).
Via duality, the paper also gives poly-logarithmic-round, distributed $\ratio$\hyp{approximation} algorithms
for \prob{Fractional Packing} linear programs
(where $\ratio$ is the maximum number of constraints in which any variable occurs),
and for \bmatching in hypergraphs
(where $\ratio$ is the maximum size of any of the hyperedges;
for graphs $\ratio=2$).
The paper also gives parallel (RNC) 2-approxi-mation algorithms
for \prob{CMIP} with two variables per constraint
and \prob{Weighted Vertex Cover}.
The algorithms are randomized.
All of the approximation ratios exactly match those
of comparable centralized algorithms.\footnote
{Preliminary versions appeared in \cite{Koufogiannakis09DistributedCovering,Koufogiannakis09DistributedPacking}.
Work by the first author was partially supported by the Greek State Scholarship Foundation (IKY).
Work by the second author was partially supported by NSF awards CNS-0626912, CCF-0729071.}
\keywords{Approximation algorithms \and Integer linear programming \and Packing and covering \and Vertex cover \and Matching}
\end{samepage}
\end{abstract}


\section{Background and results}\label{sec:background}
Many distributed systems are composed of components that can only communicate locally,
yet need to achieve a global (system-wide) goal involving many components.
The general possibilities and limitations of such systems
are widely studied \cite{Linian1992Locality,Naor1995What,Peleg2000Distributed,Kuhn04What,Kuhn06The-price,Czygrinow2008Distributed}.
It is of specific interest to see which fundamental combinatorial optimization problems 
admit efficient distributed algorithms 
that achieve approximation guarantees 
that are as good as those of the best centralized algorithms. 
Research in this spirit includes works on
\prob{Set Cover} (\prob{Dominating Set}) \cite{Jia2002DominatingSet,Kuhn2003Constant,Lenzen2008What},
\prob{capacitated dominating set} \cite{Kuhn2007CapacitatedDS},
\prob{Capacitated Vertex Cover} \cite{Grandoni05Primal-dual,Grandoni2008Primal-Dual},
and many other problems.
This paper presents distributed approximation algorithms 
for some fundamental covering and packing problems.

The algorithms use the standard synchronous communication model:
in each round, nodes can exchange a constant number of messages with neighbors
and perform local computation \cite{Peleg2000Distributed}.
There is no restriction on message size or local computation.
The algorithms are {\em efficient} ---
they finish in a number of rounds that is poly-logarithmic in the network size
 \cite{Linian1992Locality}.

\subsection{Covering Problems}\label{sec:covering problems}
Consider optimization problems of the following form:
{\em 
given a non-decreasing, continuous, and submodular\footnote
{Formally, $c(x)+c(y) \ge c(x\wedge y) + c(x\vee y)$
where $\wedge$ and $\vee$ are component-wise minimum and maximum, respectively.}
cost function $c:\Rp^n\rightarrow \Rp$,
and a set of constraints $\calC$ where 
each constraint $S\in\calC$ is closed upwards\footnote
{If $x\in S$ and $y\ge x$, then $y\in S$.}, 
\begin{center}
\noindent
find $x\in\Rp^n$
minimizing $c(x)$
s.t.~$(\forall S\in \calC)~ x\in S$.
\end{center}
}
\noindent
\prob{Submodular-cost Covering}
includes all problems of this form  \cite{Koufogiannakis2009Covering}.
The (approximation) parameter $\ratio$ 
is the maximum number of elements of $x$ on which any constraint $x\in S$ depends. 

In the above formulation, $x$ ranges over $\Rp^n$, 
but, because the constraints can be non-convex,
one can handle arbitrarily restricted variable domains
by incorporating them into the constraints \cite{Koufogiannakis2009Covering}.
For example, \wvc is equivalent to {\em minimize $\sum_v c_v x_v$
subject to
$x\in\Rp^n$ and

\smallskip
\centerline{
$x_u \ge 1$ or $x_w \ge 1$ ~~~$(\forall (u,w)\in E).$
}
}
\smallskip
\noindent
(Given any 2\hyp{approximate} solution $x$ to this formulation --- which allows $x_u\in \Rp$
--- rounding each $x_u$ down to its floor gives a 2\hyp{approximate} integer solution.)

\prob{Submodular-cost Covering} includes the following problems as special cases:
\begin{description}
\item[\prob{CIP},] covering integer programs with variable upper bounds:
{\em given $A\in\Rp^{m\times n}$, $w\in\Rp^m$, $u\in\Rp^n$,

\smallskip
\centerline{
minimize $c\cdot x$ subject to $x\in\Zp^n$, $Ax \ge w$, and $x\le u$.
}}
\smallskip

\item[\prob{CMIP},]
covering mixed integer linear programs: \prob{CIP} with both integer and fractional variables.

\smallskip
\item[\prob{Facility Location},]
\prob{Weighted Set Cover} (that is, \prob{CIP} with $A_{ij}\in\{0,1\},w_i=1$),
\prob{Weighted Vertex Cover} (that is, \prob{Weighted Set Cover} with $\ratio=2$),
and probabilistic (two-stage stochastic) variants of these problems.
\end{description}

In the centralized (non-distributed) setting there are two well-studied classes
of polynomial-time approximation algorithms for covering problems:

\begin{enumerate}
\item[(i)] $O(\log \Delta)$\hyp{approximation} algorithms
where $\Delta$ is the maximum number of constraints in which any variable occurs
(e.g.~\cite{Johnson73SetCover,Lovasz75SetCover,Chvatal79GreedySetCover,Srinivasan99Improved,Srinivasan01NewApproaches,Kolliopoulos05Approximation}), 
and 

\smallskip
\item[(ii)] $O(\ratio)$\hyp{approximation} algorithms
where $\ratio$ is the maximum number of variables in any constraint
(e.g.\ \cite{Bar-Yehuda81A-Linear-Time,Hochbaum82SetCover,Hochbaum83BoundsSetCover,Bar-Yehuda85A-local-ratio,Hall86A-fast,Bertsimas98Rounding,Carr00Strengthening,Pritchard09Approximability,Bansal2010OnkColumn,Koufogiannakis2009Covering}),
including most famously  2\hyp{approximation} algorithms for \prob{Weighted Vertex Cover}.
\end{enumerate}

The algorithms here match those in class (ii).

\paragraph{Related work.}
For \prob{Unweighted Vertex Cover}, 
it is well known that a 2\hyp{approximate} solution can be found by computing
any maximal matching, then taking the cover to contain the endpoints 
of the edges in the matching.
A maximal matching (and hence a $2$\hyp{approximate} vertex cover)
can be computed deterministically in
$O(\log^4 n)$ rounds using the algorithm of Ha\'{n}\'{c}kowiak, Karonski and Panconesi \cite{Hanckowiak2001}
or in $O(\Delta + \log^*n)$ rounds using the algorithm of Panconesi and Rizzi \cite{Panconesi2001},
where $\Delta$ is the maximum vertex degree.
Or, using randomization, a maximal matching  (and hence a $2$\hyp{approximate} vertex cover)
can be computed in an expected $O(\log n)$
rounds via the algorithm of Israeli and Itai \cite{Israeli1986}.

\prob{Maximal Matching} is also in NC
\cite{Chen1995AFast,Luby1995ASimple,Kelsen1994AnOptimalParallel} and
in RNC --- parallel poly-log time with polynomially many randomized processors
\cite{Israeli1986} --- hence so is 2\hyp{approximating} \prob{Unweighted Vertex Cover}.

For \prob{Weighted Vertex Cover}, previous to this work,
no efficient distributed 2\hyp{approximation} algorithm was known.
Similarly, no parallel NC or RNC 
2\hyp{approximation} algorithm was known.
in 1994, Khuller, Vishkin and Young
gave $2 (1+\eps)$\hyp{approximation} algorithms:
a distributed algorithm taking $O(\log n \log 1/\eps)$ rounds
and a parallel algorithm, in NC for any fixed $\eps$
\cite{Khuller94A-Primal-Dual}.\footnote
{Their result extends to $\ratio(1+\eps)$-approximating \prob{Weighted Set Cover}:
a distributed algorithm taking $O(\ratio \log n \log 1/\eps)$ rounds,
and a parallel algorithm (in NC for fixed $\eps$ and $\ratio$).}
Given integer vertex weights,
their result gives distributed $2$\hyp{approximation} 
in $O(\log n \log n \hat C))$ rounds,
where $\hat C$ is the average vertex weight.
The required number of rounds is reduced to
(expected) $O(\log n \hat C)$
by
Grandoni, K\"onemann and Panconesi 
\cite{Grandoni05Distributed,grandoni2008distributed}.

As noted in \cite{Kuhn04What}, 
neither of these algorithms is efficient 
(taking a number of rounds that is polylogarithmic in the number of vertices).

\newcommand{\rand}{\hfill (random)}

\begin{table*}[t]\label{table:covering}
\noindent\centerline{\small
\begin{tabular}[t]{|l|llrr|} \hline
problem & approx. ratio & \# rounds & where & when\\ \hline \hline
\multirow{2}{*}{\prob{Unweighted Vertex Cover}} 
& 2 & $O(\log n)$ \rand& \Cite{Israeli1986} & 1986\\
& 2 & $O(\log^4 n)$  & \Cite{Hanckowiak2001} & 2001\\
\hline
\multirow{3}{*}{\prob{Weighted Vertex Cover}} 
& $2+\eps$ & $O(\log \eps^{-1} \log n)$ & \Cite{Khuller94A-Primal-Dual} & 1994\\
& $2$ & $O(\log n \log n\hat C)$ & \Cite{Khuller94A-Primal-Dual} & 1994\\
&2 & $O(\log n\hat C)$ \rand &  \Cite{Grandoni05Distributed,grandoni2008distributed} & 2005 \\
 &2 & $O(\log n)$ \rand &  here & \\
\hline
\multirow{1}{*}{\prob{Weighted Set Cover}}
& $O(\log \Delta)$ & $O(\log m)$ \rand & \Cite{Kuhn06The-price} & 2006\\
& $\ratio$ & $O(\log m)$ \rand & here & \\
\hline
\multirow{1}{*}{\prob{CMIP} (with $\ratio=2$)} 
& $2$ & $O(\log m)$ \rand & here & \\
\multirow{1}{*}{CMIP} 
& $\ratio$ & $O(\log^2 m)$ \rand & here & \\
\hline
\end{tabular}
}
\caption{Comparison of distributed algorithms for covering problems.  $\ratio$ is the maximum number of variables in any constraint.  $\Delta$ is the maximum number of constraints in which any variable occurs.}
\vspace*{-1.5em}
\end{table*}

Kuhn, Moscibroda and Wattenhofer describe distributed approximation algorithms for 
{\em fractional} covering and packing linear programs \cite{Kuhn06The-price}.
They show an $O(1)$-approximation with high probability (w.h.p.) in $O(\log m)$ rounds
($m$ is the number of covering constraints).
The approximation ratio is greater than 2 for \prob{Fractional Weighted Vertex Cover}.
For (integer) \prob{Weighted Vertex Cover} and \prob{Weighted Set Cover}
(where each $A_{ij} \in \{0,1\}$)
combining their algorithms with randomized rounding gives
$O(\log\Delta)$\hyp{approximate} integer solutions
in $O(\log n)$ rounds,
where $\Delta$ is the maximum number of constraints in which any variable occurs.

\paragraph{Distributed lower bounds.}
The best lower bounds known for \prob{Weighted Vertex Cover} are by 
Kuhn, Moscibroda and Wattenhofer:
to achieve even a {\em poly-logarithmic} approximation ratio
requires in the worst case $\Omega(\sqrt{\log n / \log\log n})$ rounds.
In graphs of constant degree $\Delta$,
$\Omega(\log \Delta / \log\log\Delta)$ rounds are required 
\cite{Kuhn04What}.

\paragraph{New results for covering problems.}
This paper gives the following results for covering problems.
\begin{itemize}
\item Section \ref{sec:distributed_vc}
describes the first efficient distributed 2\hyp{approximation} algorithm for \prob{Weighted Vertex Cover}.
The algorithm runs in $O(\log n)$ rounds in expectation and with high probability.

\medskip
\item Section~\ref{sec:distributed_two},
generalizing the above result, describes the first efficient distributed 2\hyp{approximation} algorithm
for \prob{CMIP} (covering mixed integer linear programs with variable upper bounds)
restricted to instances where each constraint has at most two variables ($\ratio = 2$).
The algorithm runs in $O(\log m)$ rounds in expectation and with high probability, where $m$ is the number of constraints.

\medskip
\item Section \ref{sec:distributed_general}
gives the first efficient distributed $\ratio$\hyp{approximation} algorithm for
\prob{Submodular-Cost Covering} (generalizing both problems above).
The algorithm runs in $O(\log^2 m)$ rounds in expectation and with high probability, where $m$ is the number of constraints.
\end{itemize}

Previously, even for the simple special case of
$2$\hyp{approximating} \prob{Weighted Vertex Cover},
no efficient distributed $\ratio$\hyp{approximation} algorithm 
was known.

Each of the algorithms presented here
is a distributed implementation of 
a (centralized) $\ratio$\hyp{approximation} algorithm
for \prob{Submodular-cost Covering}
by Koufogiannakis and Young~\cite{Koufogiannakis2009Covering}.
Each section describes how that centralized algorithm specializes
for the problem in question, then describes an efficient distributed implementation.

\subsection{Fractional Packing, Maximum Weighted Matching}\label{sec:packing}
\packing is the following problem:
{\em
given matrix $A \in \Rp^{n \times m}$ and vectors $w\in \Rp^m$ and $c \in \Rp^n$, 
\begin{center}
maximize $w\cdot y$ subject to $y\in \Rp^m$ and $Ay \le c$.
\end{center}
}
\noindent
This is the linear-program dual of \prob{Fractional Covering} problem 
{\em minimize $c\cdot x$ s.t.~$x \in \Rp^{n}$ and $A^{T}x \ge w$.}

For packing, $\ratio$ is the maximum number of packing constraints in which any variable appears, 
that is, $\max_j |\{i~:~A_{ij}\neq 0\}|$.
In the centralized setting, \packing can be solved optimally in polynomial time using linear programming. 
Alternatively, one can use a faster approximation algorithm 
(e.g., \cite{Koufogiannakis2007Beating}).

\bmatching on a graph (or hypergraph) is the variant where each $A_{ij}\in \{0,1\}$ and 
the solution $y$ must take integer values.
(Without loss of generality each vertex capacity $c_j$ is also an integer.)
An instance is defined by a given hypergraph $H(V,E)$ 
and cost vector $c \in \Zp^{|V|}$; a solution is given by a vector $y \in \Zp^{|E|}$
maximizing $\sum_{e \in E} w_e y_e$ and meeting all the vertex capacity constraints $\sum_{e \in E(u)} y_e \le c_u ~(\forall u \in V)$,
where $E(u)$ is the set of edges incident to vertex $u$.
For this problem, $n = |V|$, $m = |E|$ and $\ratio$ is the maximum (hyper)edge degree (for graphs $\ratio =2$).

\bmatching is a cornerstone optimization problem in
graph theory and Computer Science.
As a special case it includes the ordinary \matching problem (\prob{MWM}) (where $c_u =1$ for all $u\in V$).
Restricted to graphs,
the problem belongs to the ``well-solved  class of integer linear programs"
in the sense that it can be solved optimally in polynomial time
in the centralized setting
\cite{Edmonds1965Paths,Edmonds1970Matching,Schwartz1999Implementing};
moreover
the obvious greedy algorithm (repeatedly select the heaviest edge
that not conflicting with already selected edges)
gives a 2\hyp{approximation}\footnote
{Since it is a maximization problem it is also referred to as a 1/2\hyp{approximation}.}
in nearly linear time.
For hypergraphs the problem is NP-hard --- it generalizes
\prob{set packing}, one of Karp's 21 NP-complete problems \cite{Karp1972}.

\begin{table*}[t]\label{table:packing}
\centering 
\noindent\centerline{\small
\begin{tabular}[t]{|@{~}l@{\,}|l@{~~}l@{~}r@{~}r@{~}|} \hline
problem & approx. ratio & \# rounds & where & when\\ \hline\hline
\multirow{11}{*}{\matching in graphs} 
& $O(\Delta)$ & $O(1)$ & \Cite{Uehara2000Parallel} & 2000\\
& 5 & $O(\log^2 n)$ \rand & \Cite{Wattenhofer04DistributedWeighted} & 2004\\
& $O(1) (>2)$ & $O(\log n)$ \rand & \Cite{Kuhn06The-price,Kuhn2003Constant} & 2006\\
& $4+\eps$ & $O(\eps^{-1} \log \eps^{-1} \log n)$ \rand & \Cite{Lotker2007Distributed}& 2007\\
& $2+\eps$ & $O(\log \eps^{-1} \log n)$ \rand & \Cite{Lotker2008Improved}& 2008\\
& $1+\eps$ & $O(\eps^{-4}\log^2 n)$ \rand & \Cite{Lotker2008Improved}& 2008 \\
& $1+\eps$ & $O(\eps^{-2} + \eps^{-1}\log (\eps^{-1}n) \log n)$ \rand & \Cite{Nieberg2008Local}& 2008\\
& 2 & $O(\log^2 n)$ \rand & \Cite{Lotker2008Improved,Nieberg2008Local} ($\eps =1$)& 2008\\
& $6+\eps$ & $O(\eps^{-1}\log^4 n \log(C_{\max}/C_{\min}))$ &\Cite{panconesi2010fast}  & 2010 \\
& 2 & $O(\log n)$ \rand &  here & \\
\hline
\multirow{2}{*}{\bmatching in graphs} 
& $6+\eps$ & $O(\eps^{-3}\log^3 n \log^2 (C_{\max}/C_{\min}))$ \rand &\Cite{panconesi2010fast}  & 2010 \\
& 2 & $O(\log n)$ \rand &  here & \\
\hline
\multirow{2}{*}{\bmatching in hypergraphs} & $O(\ratio)>\ratio$ & $O(\log m)$ \rand& \Cite{Kuhn06The-price,Kuhn2003Constant} & 2006\\
& $\ratio$ & $O(\log^2 m)$ \rand &  here & \\
\hline
\multirow{2}{*}{\prob{Fractional Packing} ($\ratio=2$)} & $O(1) (>2)$ & $O(\log m)$ \rand& \Cite{Kuhn06The-price} & 2006\\
 &2 & $O(\log m)$ \rand &  here & \\
\hline
\multirow{2}{*}{\prob{Fractional Packing} (general $\ratio$)} & $O(1)>12$ & $O(\log m)$ \rand & \Cite{Kuhn06The-price} & 2006\\
 & $\ratio$ & $O(\log^2 m)$ \rand &  here & \\
\hline
\end{tabular}
}
\caption{Comparison of distributed algorithms for \matching and \packing.}
\vspace*{-1.5em}
\end{table*}

\paragraph{Related work for distributed MWM and Fractional Packing.}
Several previous works consider distributed \matching in graphs.
Uehara and Chen present an $O(\Delta)$\hyp{approximation} algorithm running in a constant 
number of rounds \cite{Uehara2000Parallel}, where $\Delta$ is the maximum vertex degree.
Wattenhofer and Wattenhofer improve this result, giving a randomized $5$\hyp{approximation}
algorithm taking $O(\log^2 n)$ rounds \cite{Wattenhofer04DistributedWeighted}.
Hoepman gives a deterministic $2$\hyp{approximation} algorithm taking $O(m)$ rounds \cite{Hoepman04Simple}.
Lotker, Patt-Shamir and Ros\'{e}n give a randomized $(4+\eps)$\hyp{approximation} algorithm running in
 $O(\eps^{-1} \log \eps^{-1} \log n)$ rounds \cite{Lotker2007Distributed}.
Lotker, Patt-Shamir and Pettie improve this result to a randomized $(2+\eps)$\hyp{approximation}
algorithm taking $O(\log \eps^{-1} \log n)$ rounds \cite{Lotker2008Improved}.
Their algorithm uses as a black box any distributed constant-factor  
approximation algorithm for MWM that takes $O(\log n)$ rounds
(e.g., \cite{Lotker2007Distributed}).
Moreover, they mention (without details) that there is a  distributed $(1+\eps)$\hyp{approximation} algorithm
taking $O(\eps^{-4}\log^2 n)$ rounds, based on the parallel
algorithm by Hougardy and Vinkemeier \cite{Hougardy2006Approximating}.
Nieberg gives a $(1+\eps)$\hyp{approximation} algorithm that runs in 
$O(\eps^{-2} + \eps^{-1}\log (\eps^{-1}n) \log n)$ rounds \cite{Nieberg2008Local}.
The latter two results give randomized $2$\hyp{approximation} algorithms
for \matching in $O(\log^2 n)$ rounds.

Independently of this work, 
Panconesi and Sozio \cite{panconesi2010fast} 
give a randomized distributed $(6+\eps)$-approximation algorithm for
\bmatching in graphs,
requiring $O(\frac{\log^3 n}{\eps^3} \log^2 \frac{C_{\max}}{C_{\min}})$ rounds,
provided all edges weights lie in $[C_{\min},C_{\max}]$.
They also give a similar (but deterministic) result for \matching.

Kuhn, Moscibroda and Wattenhofer show efficient distributed approximation 
algorithms for \packing \cite{Kuhn06The-price}.
They first give a deterministic $(1+\eps)$\hyp{approximation} algorithm for \packing with logarithmic 
message size, but the number of rounds depends on the input coefficients. 
For unbounded message size they give an algorithm for \packing  that finds a constant-factor 
approximation ratio (w.h.p.) in $O(\log m)$ rounds.
If an integer solution is desired, then distributed randomized rounding 
(\cite{Kuhn2003Constant}) can be used.
This gives an $O(\ratio)$\hyp{approximation} for \bmatching on (hyper)graphs
(w.h.p.) in $O(\log m)$ rounds, where $\ratio$ is the maximum
hyperedge degree (for graphs $\ratio=2$). 
(The hidden constant factor in the big-O notation of the approximation ratio
can be relative large compared to a small $\ratio$, say $\ratio = 2$.)

\paragraph{Distributed lower bounds.} 
The best lower bounds known for distributed packing and matching are given by
Kuhn, Moscibroda and Wattenhofer:
to achieve even a poly-logarithmic approximation ratio
for \prob{Fractional Maximum Matching}
takes at least $\Omega(\sqrt{\log n / \log\log n})$ rounds.
At least $\Omega(\log \Delta / \log\log\Delta)$ rounds
are required
in graphs of constant degree $\Delta$
\cite{Kuhn06The-price}.

\paragraph{Other related work.}
For \prob{Unweighted Maximum Matching} in graphs, Israeli and Itai give a randomized distributed 2\hyp{approximation} algorithm running in $O(\log n)$ rounds \cite{Israeli1986}.
Lotker, Patt-Shamir and Pettie improve this result giving a randomized $(1+\eps)$\hyp{approximation} algorithm taking $O(\eps^{-3} \log n)$ rounds \cite{Lotker2008Improved}.
Czygrinow, Ha\'{n}\'{c}kowiak, and Szyma\'{n}ska show a deterministic 3/2\hyp{approximation} algorithm
that takes $O(\log^4 n)$ rounds \cite{Czygrinow2004AFast}.
A $(1+\eps)$\hyp{approximation} for \matching in graphs
is in NC \cite{Hougardy2006Approximating}.

\paragraph{New results for Fractional Packing and MWM.}~

\noindent
This work presents efficient distributed $\ratio$\hyp{approximation} algorithms for 
\prob{Fractional Packing} and \bmatching.
The algorithms are primal-dual extensions of the $\ratio$\hyp{approximation} algorithms
for covering.
\nocite{Koufogiannakis2009Covering,Bar-Yehuda04Local}. 

\begin{itemize}
\item
Section \ref{sec:distributed_pack_two} describes 
a distributed 2\hyp{approximation} algorithm 
for \packing where each variable appears in at most two 
constraints ($\ratio=2$), 
running in $O(\log m)$ rounds in expectation and with high probability.
Here $m$ is the number of packing variables.
This improves the approximation ratio over the previously best known algorithm
\cite{Kuhn06The-price}.

\medskip
\item
Section \ref{sec:distributed_pack_general} describes
a distributed $\ratio$\hyp{approximation} algorithm
for \packing where each variable appears in at most $\ratio$ constraints, running 
in $O(\log^2 m)$ rounds in expectation and with high probability, 
where $m$ is the number of variables. 
For small $\ratio$, this improves over the best previously known constant
factor approximation \cite{Kuhn06The-price},
but the number of rounds is bigger by a logarithmic-factor. 

\medskip
\item
Section \ref{sec:distributed_pack_two} 
gives a distributed 2\hyp{approximation} algorithm
for \bmatching in graphs,
running in $O(\log n)$ rounds 
in expectation and with high probability.
\bmatching generalizes the well studied \matching problem.
For 2\hyp{approximation}, this algorithm is faster by at least a
logarithmic factor than any previous algorithm. 
Specifically, in $O(\log n)$ rounds, the algorithm gives the best known approximation ratio.
The best previously known algorithms compute a $(1+\eps)$\hyp{approximation} in 
$O(\eps^{-4}\log^2 n)$ rounds \cite{Lotker2008Improved}  
or in $O(\eps^{-2} + \eps^{-1}\log (\eps^{-1}n) \log n)$ rounds \cite{Nieberg2008Local}. 
For a 2\hyp{approximation} each of these algorithms needs $O(\log^2 n)$ rounds.

\medskip
\item
Section \ref{sec:distributed_pack_general} 
also gives a distributed $\ratio$\hyp{approximation} algorithm 
for \bmatching in hypergraphs with maximum hyperedge degree $\ratio$,
running in $O(\log^2 m)$ rounds
in expectation and with high probability, where $m$ is the number of hyperedges. 
This result improves over the best previously known $O(\ratio)$\hyp{approximation} ratio by
\cite{Kuhn06The-price}, but it is slower by a logarithmic factor. 
\end{itemize}

\section{Weighted Vertex Cover}
\label{sec:distributed_vc}

\subsection{Sequential Implementation}\label{sec:sequential vc}
First, consider the following sequential 2\hyp{approximation} algorithm for \prob{Weighted Vertex Cover}.\footnote{For \prob{Weighted Vertex Cover}, 
this sequential algorithm by Koufogiannakis and Young~\cite{Koufogiannakis2009Covering}
is equivalent to the classic 2\hyp{approximation} algorithm
by Bar-Yehuda et al.~\cite{Bar-Yehuda81A-Linear-Time}.}

The algorithm starts with $x=\mathbf 0$.
To cover edge $(v,w)$,  it calls $\step(x,(v,w))$, which raises $x_v$ and $x_w$
at rates inversely proportional to their respective costs,
until $x_v$ or $x_w$ reaches 1
(increase $x_v$ by $\beta/c_v$ and $x_w$ by $\beta/c_w$,
where $\beta = \min\{(1-x_v)c_v, (1-x_w)c_w\}$).
When a variable $x_v$ reaches 1, $v$ is added to the cover.
The algorithm does $\step(x,e)$ for not-yet-covered edges $e$,
until all edges are covered.

\begin{figure*}[t]
\begin{alg}
\Ahead{{\bf Distributed 2-approximation algorithm for Weighted Vertex Cover ($G=(V,E)$, $c: V \rightarrow \Rp$)}}
\Alabel{alg:distributed_vc}

\A At each node $v$: initialize $x_v \leftarrow 0$.

\A Until all vertices are finished, perform rounds as follows:
\algbeg
\Along{{At each node $v$:}
if all of $v$'s edges are covered, finish;
else, choose to be a {\em leaf} or {\em root},
each with probability 1/2.}

\Along{{At each leaf node $v$:}
Label each not-yet covered edge $(v,w)$ {\em active} 
if $w$ is a root and $\step(x,(v,w))$ 
(with the current $x$) would add $v$ to the cover.
Choose, among these active edges,
a random {\em star edge} $(v,w)$.}
\smallskip

\A At each root node $w$,
flip a coin, then run the corresponding subroutine below:
\algbeg

\Anonum{\bf heads$(w)$:}
For each star edge $(v,w)$
(in some fixed order) do: if $w$ is not yet in the cover, then do $\step(x,(v,w))$.

\Anonum{\bf tails$(w)$:}
Do $\step(x,(v,w))$ just for the {\em last} edge for which $\heads(w)$ would do $\step(x,(v,w))$.
\algend
\algend

\algsep

\Ahead{$\step(x,(v,w))$:}
\A Let scalar $\beta \leftarrow \min\big((1-x_v)c_v, (1-x_w)c_w\big)$.  
\comment{just enough to ensure $v$ or $w$ is added to the cover below}
\A Set $x_v \leftarrow x_v + \beta/c_v$.
If $x_v = 1$, add $v$ to the cover, covering all of $v$'s edges.
\A Set $x_w \leftarrow x_w + \beta/c_w$.
If $x_w = 1$, add $w$ to the cover, covering all of $w$'s edges.
\end{alg}
\vspace{-2em}
\end{figure*}

\subsection{Distributed and Parallel Implementations}\label{sec:distributed vc}

In each round, 
the {\em distributed} algorithm simultaneously performs $\step(x,e)$
 on a large subset of the not-yet-covered edges,
as follows.
Each vertex randomly chooses to be a {\em leaf} or a {\em root}.
A not-yet-satisfied edge $(v,w)$ is called {\em active}
if $v$ is a leaf, $w$ is a root, and, if $\step(x,(v,w))$
were to be performed, $v$ would enter the cover.
Each leaf $v$ chooses a random active edge $(v,w)$.
The edges chosen by the leaves are called {\em star edges};
they form stars with roots at their centers.

Each root $w$ then flips a coin.
If heads comes up  (with probability 1/2),
$w$ calls $\heads(w)$,
which does $\step(x,(v,w))$ for its star edges $(v,w)$ in any order,
until $w$ enters the cover or all of $w$'s star edges have steps done.
Or, if tails comes up,
$w$ calls $\tails(w)$,
which simulates $\heads(w)$, without actually doing any steps,
to determine the {\em last} edge $(v,w)$ that $\heads(w)$ would do a step for,
and performs step $\step(x,(v,w))$ for just {\em that} edge.
For details see \ref{alg:distributed_vc}.
\medskip

\begin{theorem}\label{thm:distributed_vc}
For 2-approximating \prob{Weighted Vertex Cover}:

\smallskip
\noindent
(a) There is a distributed algorithm 
running in $O(\log n)$ rounds in expectation and with high probability.

\smallskip
\noindent
(b) There is a parallel algorithm in ``Las Vegas'' RNC.
\end{theorem}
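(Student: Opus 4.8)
The plan is to show that each round of \ref{alg:distributed_vc} realizes a legal run of the sequential algorithm of Section~\ref{sec:sequential vc}, whose $2$-approximation guarantee then applies verbatim. The key observation is that the star edges chosen in a round are vertex-disjoint across distinct roots: each leaf selects at most one star edge, so no vertex is the leaf of two star edges; distinct roots are distinct vertices; and by the leaf/root partition no leaf is a root. Hence the steps performed at different roots touch disjoint vertex sets, while a single root $w$ performs its steps on its own star edges sequentially inside $\heads(w)$ or $\tails(w)$. Consequently the steps of a whole round can be serialized --- all steps of one star, then the next --- and each $\step(x,(v,w))$ acts on an edge still uncovered when it is performed (each leaf appears once, and $\heads(w)$ halts as soon as $w$ enters the cover). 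Thus a round is a contiguous block of legal sequential steps, every invariant is preserved, and once all edges are covered the resulting set is a $2$-approximate cover (flooring fractional values as in Section~\ref{sec:covering problems}).

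\medskip\noindent\textbf{Number of rounds.} I would take the potential to be $U$, the number of uncovered edges, and prove a per-round drift bound $\mathbf{E}[U' \mid U] \le (1-\epsilon)U$ for a constant $\epsilon>0$. Since $U \le \binom{n}{2}$ and $U$ is integer-valued, iterating gives $\mathbf{E}[U_t] \le (1-\epsilon)^t n^2$, so $\Pr[U_t \ge 1] \le \mathbf{E}[U_t] = n^{-\Omega(1)}$ for $t = O(\log n)$; this yields the high-probability bound and, via $\mathbf{E}[\text{rounds}] = \sum_t \Pr[U_t \ge 1]$, the expected bound. The engine of the drift bound is a clean per-star-edge claim: fix a root $w$ and list its star edges in the order $\heads(w)$ processes them, and let $j$ be the first position at which $w$ would enter the cover (or one past the last edge if $w$ never does). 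A short case analysis over $w$'s fair coin shows that \emph{each} star edge has an endpoint enter the cover with probability at least $1/2$: edges before position $j$ finish their leaf under heads; the edge at position $j$ finishes $w$ under heads and, because $\tails(w)$ starts from $w$'s fresh residual, finishes its leaf under tails; edges after $j$ are covered because $w$ enters the cover under heads. This is exactly the guarantee the $\heads/\tails$ coin is designed to provide.

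\medskip\noindent\textbf{Main obstacle.} The difficulty is converting this per-star-edge statement into a constant-factor decrease of $U$: a single leaf may be incident to many active edges yet contribute only one star edge, so one cannot bound covered edges by the number of star edges. The resolution I would pursue exploits the fact that finishing a vertex covers \emph{all} of its incident edges at once. Each uncovered edge is active --- its lower-residual endpoint a leaf and its higher-residual endpoint a root --- with probability $1/4$; I charge each active edge to the star edge selected by its leaf. When that star edge finishes its leaf, every active edge charged to it is covered; when it finishes its root, all of the root's edges (including those charged) are covered. Balancing the leaf-finishing and root-finishing cases, which the coin keeps in fixed proportion, should give $\mathbf{E}[\text{edges covered}] = \Omega(U)$. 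Carrying out this accounting with explicit constants is the main technical step.

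\medskip\noindent\textbf{Part (b).} For the parallel statement I would note that the algorithm is Las Vegas: by the correctness argument it \emph{always} outputs a $2$-approximate cover, so only its running time is random. Each round lies in NC: the leaf/root and heads/tails coin flips are local and independent; residuals and active edges are computed locally; each leaf picks a random active edge locally; and the only inherently sequential piece, the sweep inside $\heads(w)$, reduces to locating the critical index $j$ as the first point where a prefix sum of the leaves' residuals reaches $w$'s residual, which a parallel-prefix computation finds in $O(\log n)$ time, after which all updated $x$-values are given in closed form. Combining this with the $O(\log n)$-round high-probability bound from the drift analysis gives $O(\log^2 n)$ parallel time on polynomially many processors with high probability, placing $2$-approximate \prob{Weighted Vertex Cover} in Las Vegas RNC.
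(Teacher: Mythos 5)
Your overall architecture matches the paper's: a constant-fraction-per-round drift bound on the number of uncovered edges, a case analysis driven by the $\heads/\tails$ coin, and a prefix-sum implementation of $\heads(w)$ for the RNC claim. Your per-star-edge claim (every star edge has an endpoint enter the cover with probability at least $1/2$, by splitting at the first position $j$ where $w$ would saturate) is correct, and your correctness and Part~(b) arguments are essentially the paper's.

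However, there is a genuine gap exactly at the step you flag as ``the main technical step,'' and the charging scheme you sketch to close it does not work. You charge every active edge $(v,w')$ of a leaf $v$ to the single star edge $(v,w)$ that $v$ selects, and you assert that when the star edge ``finishes its root, all of the root's edges (including those charged) are covered.'' That parenthetical is false: the edges charged to $(v,w)$ are incident to the \emph{leaf} $v$, not to the root $w$, so when $w$ enters the cover only $(v,w)$ itself among the charged edges is covered. Concretely, suppose every root $w$ adjacent to $v$ has enough other star edges that $\heads(w)$ would saturate $w$ before ever reaching $(v,w)$. Then no choice of star edge ever puts $v$ into the cover, your charging scheme accounts for only one covered edge out of $\deg(v)$ active ones, and the claimed $\mathbf{E}[\text{edges covered}]=\Omega(U)$ does not follow. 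In reality a constant fraction of $v$'s edges \emph{are} covered in that scenario, but for a reason your accounting cannot see: each such root independently enters the cover under its own heads flip, \emph{regardless of $v$'s star-edge choice}. The paper closes this hole with a dichotomy per leaf $v$ (after conditioning on the other leaves' star choices): either at least half of $v$'s active edges $(v,w)$ are such that $\heads(w)$ would saturate $w$ before reaching $(v,w)$ --- and then each such edge is covered with probability $1/2$ by $w$'s coin alone --- or at least half are such that $\heads(w)$ would do $\step(x,(v,w))$ --- and then $v$ picks one of these with probability $\ge 1/2$ and, by your own per-star-edge argument, enters the cover with probability $\ge 1/2$, covering \emph{all} of its edges. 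Either branch yields a quarter of $v$'s active edges covered in expectation, which (after the Markov step showing $1/8$ of uncovered edges are active with probability $1/7$) gives the paper's $1/224$ constant. Without this dichotomy, or some equivalent argument that handles leaves none of whose star-edge choices can put them in the cover, your drift bound is unproven.
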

\begin{proof}
(a)
The proof starts by showing that, in each round of \ref{alg:distributed_vc}, 
at least a constant fraction (1/224)
of the not-yet-covered edges are covered in expectation.

Any not-yet-covered edge $(v,w)$ is active for the round with probability at least 1/4,
because $\step(x,(v,w))$ would bring at least one of $v$ or $w$ into the cover,
and with probability 1/4 that node is a leaf and the other is a root.
Thus, by a lower-tail bound, with constant probability (1/7)
at least a constant fraction (one eighth) of the remaining edges are active.\footnote
{In expectation, at most 3/4 of the edges are inactive.
By Markov's inequality, the probability that more than 7/8 of the edges are inactive
is at most (3/4)/(7/8) = 6/7.
Thus, with probability at least 1/7, at least 1/8 of the edges are active.}
Assume at least an eighth of the remaining edges are active.
Next, condition on all the choices of leaves and roots (assume these are fixed).  

It is enough to show that, for an arbitrary leaf $v$,
in expectation at least a quarter of $v$'s active edges will be covered.\footnote
{If so, then by linearity of expectation (summing over the leaves),
at least 1/4 of all active edges will be covered.  
Since a 1/8 of the remaining edges are active (with probability 1/7),
this implies that at least a $1/7*8*4 = 1/224$ fraction of the remaining edges 
are covered in expectation.}
To do so, condition on the star edges chosen by the {\em other} leaves.
(Now the only random choices {\em not} conditioned on 
are $v$'s star-edge choice and the coin flips of the roots.)

\begin{figure}
\centering
\includegraphics[width=0.47\textwidth]{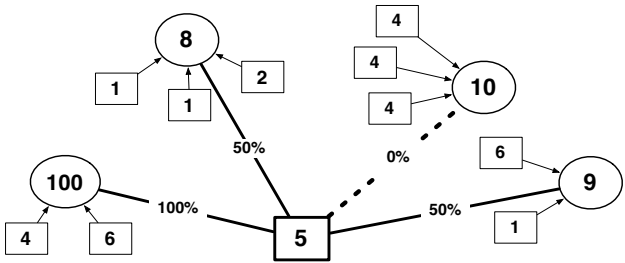}
\caption{Analysis of \ref{alg:distributed_vc}. Each node is labeled with its cost.
Roots are circles; leaves are squares;
star edges from leaves other than $v$ (the cost-5 leaf) are determined as shown.
Each edge $(v,w)$ is labeled with the chance that $v$ would enter the cover
if $v$ were to choose $(v,w)$ for its star edge
(assuming each $x_w=x_v=0$ and each root $w$ considers its star edges counter-clockwise).
}\label{fig:figure}
\vspace{-2em}
\end{figure}

At least one of the following two cases must hold. 

\noindent{{\bf Case 1:}}
{\em At least half of $v$'s active edges $(v,w)$ 
have the following property:
if $v$ were to choose $(v,w)$ as its star edge, and $w$ were to do $\heads(w)$, 
then $\heads(w)$ would {\em not} perform $\step(x,(v,w))$.}
That is, $w$ would enter the cover 
before $\heads(w)$ would consider $(v,w)$
(in Fig.~\ref{fig:figure}, see the cost-10 node).

For such an edge $(v,w)$, on consideration,
$\heads(w)$ will bring $w$ into the cover
{\em whether or not} $v$ chooses $(v,w)$ for its star edge.
So, edge $(v,w)$ will be covered in this round,
regardless of $v$'s choice of star edge, 
as long as $w$ does $\heads(w)$.
Since $w$ does $\heads(w)$ with probability 1/2,
edge $(v,w)$ will be covered with probability 1/2.

Since this is true for at least half of $v$'s active edges,
in expectation, at least a quarter of $v$'s active edges
will be covered during the round.

\noindent{{\bf Case 2:}}
{\em At least half of $v$'s active edges $(v,w)$
have the following property:
if $v$ were to choose $(v,w)$ as its star edge,  and $w$ were to do $\heads(w)$,
then $\heads(w)$ {\em would} perform $\step(x,(v,w))$.}

For such an edge $(v,w)$,
$\heads(w)$ would bring $v$ into the cover
as long as $\step(x,(v,w))$ would not be the {\em last} step performed
by $\heads(w)$
(in Fig.~\ref{fig:figure}, the cost-8 and cost-100 nodes).
Or, if $\step(x,(v,w))$ would be the last step performed by $\heads(w)$,
then $\tails(w)$ would do {\em only} $\step(x,(v,w))$,
which would bring $v$ into the cover
(by the assumption that,
at the start of the round $(v,w)$ is active
so that $\step(x,(v,w))$ would bring $v$ into the cover)
(in Fig.~\ref{fig:figure},  the cost-9 node).
Thus, for such an edge $(v,w)$,
one of $\heads(w)$ or $\tails(w)$
would bring $v$ into the cover.
Recall that $w$ has probability 1/2 of doing $\heads(w)$ 
and probability 1/2 of doing $\tails(w)$.
Thus, if $v$ chooses such an edge,
$v$ enters the cover with at least probability 1/2.

In the case under consideration,
$v$ has probability at least 1/2 of choosing such an edge.
Thus, with probability at least 1/4, $v$ will enter the cover 
and all of $v$'s edges will be deleted.
Thus, in this case also, a quarter of $v$'s edges are covered in expectation
during the round.

Thus, in each round, in expectation at least 1/224 of the remaining edges are covered.
By standard arguments, this implies that the expected
number of rounds is at most about $224\ln(n^2) = 448\ln n$,
and that the number of rounds is $O(\log n)$ with high probability. \footnote
{\label{footnote:karp}
For the high-probability bound,
see e.g.~\cite[Thm.~1.1]{karp1994probabilistic}
(the example following that theorem).
Alternatively, note that, in each round, by a lower-tail bound,
with some constant probability
at least some constant fraction of the remaining edges are covered.
Say that a round is {\em good} if this happens.
By Azuma's inequality, with high probability, 
for some suitably large constants $a < b$,
at least $a\log m$ of the first $b\log m$ rounds are good,
in which case all edges are covered in $b\log m$ rounds.
}

This completes the proof of Thm.~\ref{thm:distributed_vc}, Part (a).

\smallskip
Next is the proof of Part (b).
To obtain the parallel algorithm, implement $\heads(w)$ as follows.
For $w$'s $k$th star edge $e_k$,
let $\beta_k$ be the $\beta$ that $\step(x,e_k)$ would use
if given $x$ at the {\em start} of the round.
If $\heads(w)$ eventually does $\step(x,e_k)$ for edge $e_k$,
the step will increase $x_w$ by $\beta_k/c_w$,
unless $e_k$ is the last edge $\heads(w)$ does a step for.
Thus, the edges for which $\heads(w)$ will do $\step(x,e_k)$
are those for which $x_w + \sum_{j=1}^{k-1} \beta_j/c_w < 1$.
These steps can be identified by a prefix-sum computation,
then all but the last can be done in parallel.
This gives an NC implementation of $\heads(w)$.
The RNC algorithm simulates the distributed algorithm for $O(\log n)$ rounds;
if the simulated algorithm halts, the RNC algorithm returns $x$, and otherwise
it returns ``fail''.  
This completes the proof of Thm.~\ref{thm:distributed_vc}. \qed
\end{proof}

\section{Mixed Integer Programs with Two Variables per Constraint}
\label{sec:distributed_two}
\noindent 
This section generalizes the results of Section~\ref{sec:distributed_vc} to CMIP$_2$ 
(CMIP with at most two non-zero coefficients $A_{ij}$ in each constraint).

\subsection{Sequential Implementation for \prob{CMIP} (any $\ratio$)}
\label{sec:sequential_two}
First, consider the following sequential $\ratio$\hyp{}approximation
algorithm for \prob{CMIP} \cite{Koufogiannakis2009Covering}.
Model the \prob{CMIP} constraints (including the upper bounds and integrality constraints)
by allowing each $x_j$ to range freely in $\Rp$
but replacing each constraint $A_i x \ge b$ 
by the following equivalent constraint $S_i$:
\begin{equation*}
~ \sum_{j\in I} A_{ij}\lfloor \min(x_j,u_{j}) \rfloor 
+ \sum_{j\in \overline I} A_{ij} \min(x_j,u_{j}) \Ge b_i
\end{equation*}
where set $I$ contains the indexes of the integer variables.

The algorithm starts with $x=\mathbf 0$,
then repeatedly does $\step(x,S)$, defined below, for any unsatisfied constraint $S$:

\paragraph{subroutine $\step(x,S)$:}
{\em 
\\ \tab 1. Let $\beta \leftarrow \stepsize(x,S)$.
\\ \tab 2. For each $j$ with $A_{ij}>0$, increase $x_j$ by ${\beta}/{c_j}$.}

\medskip
\noindent
Here $\stepsize(x,S)$ (to be defined shortly) can be smaller than the $\beta$ in \ref{alg:distributed_vc}.  
(Each step might not satisfy its constraint.)

After satisfying all constraints, the algorithm rounds each $x_j$
down to $\lfloor \min(x_j, u_j)\rfloor$ and returns the rounded $x$.
For this to produce a $\ratio$-approximate solution,
it suffices for $\stepsize(x,S)$ to
return a lower bound on the minimum cost of augmenting $x$ to satisfy $S$,
that is, on $\distance_c(x,S)$ $= \min\{c(\hat x) - c(x)|\hat x \in S, \hat x \ge x\}$:

\begin{observation}(\cite{Koufogiannakis2009Covering})\label{obs:stepsize}
If $\stepsize(x,S) \le \distance_c(x,S)$ in each step,
and the algorithm above terminates,
then it returns a $\ratio$-approximate solution.
\end{observation}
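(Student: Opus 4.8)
The plan is to show the returned solution has cost at most $\ratio\cdot\opt$ by (i) bounding the pre-rounding cost of $x$ by $\ratio$ times the sum of all stepsizes used, and (ii) bounding that sum of stepsizes by $\opt$ through a potential argument against a fixed optimal solution $x^*$.

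First I would record the per-step cost increase. A call to $\step(x,S_i)$ with $\beta=\stepsize(x,S_i)$ raises $x_j$ by $\beta/c_j$ for each of the at most $\ratio$ indices $j$ with $A_{ij}>0$, so it raises the cost $c\cdot x$ by exactly $\beta\,|\{j:A_{ij}>0\}|\le \ratio\,\beta$. Writing $\beta_1,\beta_2,\dots$ for the successive stepsizes, the cost of $x$ at termination is therefore at most $\ratio\sum_t\beta_t$. Rounding each coordinate down to $\lfloor\min(x_j,u_j)\rfloor$ (resp.\ $\min(x_j,u_j)$) only decreases the cost, and by the very definition of the equivalent constraints $S_i$ it preserves feasibility for the original \prob{CMIP}. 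Hence it suffices to prove $\sum_t\beta_t\le\opt$.

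For that, fix an optimal feasible $x^*$ and define the potential $\Phi(x)=\sum_j c_j\max(x^*_j-x_j,0)$. Since $x$ starts at $\mathbf 0$ and $x^*\ge\mathbf 0$, we have $\Phi(\mathbf 0)=c\cdot x^*=\opt$, and $\Phi\ge 0$ always. I will show that each step decreases $\Phi$ by at least its own stepsize $\beta$; summing over all steps then yields $\sum_t\beta_t\le\Phi(\mathbf 0)=\opt$, completing the proof. The key inequality driving this is
\[
  \beta \;\le\; \distance_c(x,S_i) \;\le\; \sum_{j:\,A_{ij}>0} c_j\max(x^*_j-x_j,0),
\]
where the first inequality is the hypothesis $\stepsize(x,S_i)\le\distance_c(x,S_i)$. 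The second inequality is where the structure of the problem enters: because $S_i$ is closed upwards and depends only on the coordinates $\{j:A_{ij}>0\}$, the point $\hat x$ that agrees with $x\vee x^*$ (component-wise maximum) on those coordinates and with $x$ elsewhere dominates $x^*$ on every coordinate $S_i$ sees, hence lies in $S_i$, while also satisfying $\hat x\ge x$; thus $\hat x$ is admissible in the minimization defining $\distance_c(x,S_i)$, giving $\distance_c(x,S_i)\le c(\hat x)-c(x)=\sum_{j:A_{ij}>0}c_j\max(x^*_j-x_j,0)$.

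Finally, because the step raises $x_j$ by $\beta/c_j$ on each relevant $j$, the drop in $\Phi$ is exactly $\sum_{j:A_{ij}>0}\min\!\big(\beta,\;c_j\max(x^*_j-x_j,0)\big)$, which, using $\sum_j\min(\beta,a_j)\ge\min(\beta,\sum_j a_j)$ for nonnegative $a_j$ together with the displayed inequality, is at least $\min(\beta,\sum_{j:A_{ij}>0}c_j\max(x^*_j-x_j,0))\ge\beta$. I expect the middle inequality $\distance_c(x,S_i)\le\sum_{j:A_{ij}>0}c_j\max(x^*_j-x_j,0)$ to be the one delicate point: everything else is bookkeeping, but this inequality is exactly where upward-closure of the constraints (and the fact that each $S_i$ depends only on its own variables) is used, via capping $x^*$ to $S_i$'s coordinates with $x\vee x^*$. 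The same potential and $x\vee x^*$ argument specializes from the general non-decreasing submodular-cost setting of \cite{Koufogiannakis2009Covering}; here the linearity of $c$ makes the per-step cost bound and the drop in $\Phi$ especially transparent.
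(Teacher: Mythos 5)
Your proof is correct and follows essentially the same route as the paper's own (sketched) argument: charging each step's cost increase of at most $\ratio\beta$ against a decrease of at least $\beta$ in the potential $\sum_j c_j\max(0,x^*_j-x_j)$. You simply fill in the details the paper defers to \cite{Koufogiannakis2009Covering} --- in particular the $x\vee x^*$ construction certifying $\distance_c(x,S_i)\le\sum_{j}c_j\max(x^*_j-x_j,0)$ --- and these are carried out correctly.
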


\begin{proof}[sketch]
Let $x^*$ be any optimal solution. 
A step increases the cost $c\cdot x$ by $\ratio\beta$,
but decreases the potential $\sum_{v \in V} c_v\max(0, x^*_v - x_v)$
by at least $\beta$.
Details in \cite{Koufogiannakis2009Covering}.
\qed
\end{proof}

Compute $\stepsize(x,S_i)$ as follows.
Consider the relaxations of $S_i$ that can be obtained from $S_i$
by relaxing any subset of the integrality constraints or variable upper bounds.
(That is, replace
$\lfloor \min(x_j,u_j)\rfloor$
by
$\min(x_j,u_j)$
for any subset of the $j$'s in $I$,
and then replace
$\min(x_j,u_j)$
by $x_j$
for any subset of the $j$'s.)
Since there are at most $\ratio$ variables per constraint 
and each can be relaxed (or not) in 4 ways,
there are at most $4^{\ratio}$ such relaxed constraints. 

Define the potential $\Phi(x,S_i)$ of constraint $S_i$
to be the number of these relaxed constraints
not satisfied by the current $x$.
Compute $\stepsize(x,S_i)$ as 
{\em the minimum cost to increase just {\bf one} variable enough to reduce $\Phi(x,S_i)$}.
\begin{observation}\label{obs:hitting}
With this $\stepsize()$,
$\step(x,S_i)$ is done at most $4^{\ratio}$ times before constraint $S_i$ is satisfied.
\end{observation}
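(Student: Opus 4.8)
The plan is to track the integer potential $\Phi(x,S_i)$, which counts the unsatisfied relaxed constraints among the at most $4^\ratio$ of them. Since $\Phi$ is a non-negative integer bounded by $4^\ratio$, I would reduce the claim to two facts: (i) each execution of $\step(x,S_i)$ decreases $\Phi(x,S_i)$ by at least one, and (ii) $S_i$ is satisfied as soon as $\Phi(x,S_i)=0$. Fact (ii) is immediate, because $S_i$ is itself among the relaxed constraints (the relaxation that relaxes nothing), so $\Phi=0$ forces $S_i$ to hold.

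The core of the argument is fact (i). Recall that $\beta=\stepsize(x,S_i)$ is by definition the least cost of raising one variable by enough to reduce $\Phi(x,S_i)$. I would fix a variable $j^*$ attaining this minimum, so that raising $x_{j^*}$ alone by $\beta/c_{j^*}$ already satisfies at least one currently-unsatisfied relaxed constraint. The step $\step(x,S_i)$ raises every $x_j$ with $A_{ij}>0$ by $\beta/c_j$, and in particular raises $x_{j^*}$ by exactly $\beta/c_{j^*}$. Each relaxed constraint is closed upwards: its left-hand side is a non-decreasing function of $x$ (each term $\lfloor\min(x_j,u_j)\rfloor$, $\min(x_j,u_j)$, or $x_j$ is non-decreasing in $x_j$), so raising any variable can only satisfy more constraints and can never unsatisfy one already satisfied. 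Hence after the step at least one more relaxed constraint holds and none reverts, so $\Phi(x,S_i)$ drops by at least one.

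Combining the two facts finishes the proof: $\Phi$ begins at most $4^\ratio$, strictly decreases at every step, stays non-negative, and $S_i$ is satisfied by the time $\Phi$ reaches $0$; therefore $\step(x,S_i)$ runs at most $4^\ratio$ times before $S_i$ is satisfied.

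The step that needs the most care is fact (i): because $\step$ raises \emph{all} variables with positive coefficient, not just the single critical variable $j^*$ used to define $\stepsize$, I must rule out the possibility that the extra increases push $\Phi$ back up. This is exactly where upward-closedness of the relaxed constraints is indispensable --- it makes satisfaction monotone, so the full step dominates the single-variable increase and inherits its guaranteed unit decrease in $\Phi$.
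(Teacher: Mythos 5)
Your proof is correct and follows exactly the argument the paper leaves implicit: the potential $\Phi(x,S_i)$ starts at most $4^\ratio$, each step strictly decreases it (because the step raises the critical variable $j^*$ by the full $\beta/c_{j^*}$, and monotonicity of the relaxed constraints prevents any back-sliding), and $\Phi=0$ forces $S_i$ to hold. The paper states the observation without proof, and your write-up is a faithful and complete justification of it.
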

(More efficient $\stepsize$ functions are described in \cite{Koufogiannakis2009Covering}.)

This step size satisfies the necessary condition for the algorithm to produce a $\ratio$-approximate solution:

\begin{lemma}\label{lemma:satisfies}
$\stepsize(x,S_i) \le \distance_c(x,S_i)$
\end{lemma}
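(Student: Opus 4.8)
The plan is to compare the cheapest single-variable increase that lowers the potential $\Phi(x,S_i)$ against the cheapest augmentation of $x$ that satisfies $S_i$. First I would fix a minimizer $\hat x\ge x$ with $\hat x\in S_i$, so that $\distance_c(x,S_i)=\sum_j c_j(\hat x_j-x_j)$ with every $\hat x_j-x_j\ge 0$. Every relaxed constraint counted by $\Phi$ is implied by $S_i$ (replacing $\lfloor\cdot\rfloor$ by $\cdot$ or $\min(x_j,u_j)$ by $x_j$ only raises the left-hand side), so $\hat x$ satisfies all of them and $\Phi(\hat x,S_i)=0$. Since $\stepsize(x,S_i)$ is the minimum cost over \emph{all} single-coordinate moves that lower $\Phi$, it suffices to exhibit \emph{one} coordinate whose increase (the others held at $x$) already drops $\Phi$ at cost at most $\distance_c(x,S_i)$; a convenient sufficient condition is that raising a single $x_j$ only as far as $\hat x_j$ lowers $\Phi$, since then the cost is $c_j(\hat x_j-x_j)\le\sum_k c_k(\hat x_k-x_k)=\distance_c(x,S_i)$.

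Next I would split on the fully relaxed constraint $S_i'$: $\sum_j A_{ij}x_j\ge b_i$. In the clean case, $S_i'$ is currently violated. Then a single variable can satisfy it: for one $\ge$-constraint with nonnegative coefficients the cheapest feasible nonnegative increase puts all its weight on the coordinate minimizing $c_j/A_{ij}$, so the minimum single-variable cost to satisfy $S_i'$ equals $\distance_c(x,S_i')$, which is at most $\distance_c(x,S_i)$ because $S_i'$ is a relaxation of $S_i$. That one move satisfies the previously violated $S_i'$ and hence lowers $\Phi$, giving $\stepsize(x,S_i)\le\distance_c(x,S_i)$ in this case.

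The hard part, and the main obstacle, is the case where $S_i'$ already holds but $S_i$ fails, so the violation comes purely from the floor and upper-bound roundings and the continuous exchange argument yields nothing. Here I would pick a \emph{maximal} violated relaxation $C^*$ (one for which relaxing any single remaining floor or cap makes it hold); maximality forces the deficit $b_i-(\text{LHS of }C^*)$ to be strictly less than the coefficient $A_{ij}$ of some still-rounded coordinate $j$, so pushing that one coordinate across its next integer threshold (or up to its cap) raises $C^*$'s left side by $A_{ij}$ and satisfies it, dropping $\Phi$. The delicate step is choosing this coordinate and threshold so that its single-coordinate cost is charged against the minimizer's own spending, i.e.\ bounded by $c_j(\hat x_j-x_j)\le\distance_c(x,S_i)$; this should follow because $\hat x$ itself must cross exactly such thresholds to satisfy $S_i$, but matching the chosen crossing to the increments of $\hat x$, across the at most $\ratio$ coordinates and their $4^{\ratio}$ relaxations, is the bookkeeping I expect to be the real work, whereas the reduction in the first paragraph and the continuous case are routine.
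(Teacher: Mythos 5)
Your overall architecture---fix the optimal augmentation $\hat x$, locate a violated relaxed constraint that some single-coordinate increase would satisfy, and charge that increase against $\hat x$---is the same as the paper's, and your first two paragraphs (the reduction to exhibiting one cheap coordinate, and the case where the fully relaxed constraint is still violated) are correct. The gap is in your Case~3, and it is not just deferred bookkeeping: the charging principle you state there is false. You claim the single-coordinate cost is controlled ``because $\hat x$ itself must cross exactly such thresholds to satisfy $S_i$,'' but $\hat x$ need not cross any integer threshold even when the violation is caused entirely by the roundings. Take $S_i:\ \lfloor x_1\rfloor + x_2\ge 1$ with $x=(0.5,\,0.6)$, $c=(1000,\,1)$, $x_1$ integer, no upper bounds. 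The fully relaxed constraint holds ($1.1\ge 1$); the unique maximal violated relaxation is $S_i$ itself, with deficit $0.4<A_{i1}$; your recipe pushes $x_1$ to its next integer at cost $500$; yet $\distance_c(x,S_i)=0.4$, attained by $\hat x=(0.5,\,1)$, which crosses no threshold. So the crossing selected by maximality alone cannot be matched to the minimizer's spending. (A secondary flaw: ``pushing that one coordinate \ldots{} up to its cap'' does not raise the left side by $A_{ij}$, and in a maximal violated relaxation every still-enforced cap sits on a variable with $x_j$ above $u_j$, whose term $\min(x_j,u_j)=u_j$ is frozen and cannot be raised by any push---so those coordinates are unusable, and the pushable ones are exactly the uncapped floored and fully relaxed ones.)

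The missing ingredient is a dichotomy driven by $\hat x$, not by the relaxation alone. The paper's proof chooses the relaxation $S'_i$ that drops the caps of every variable with $x_j<u_j$, leaves the saturated variables as frozen constants, and keeps only a \emph{minimal} set $J$ of floors subject to remaining violated; minimality guarantees that raising any single term $\lfloor x_j\rfloor$, $j\in J$, by one unit covers the entire deficit. Then: either $\hat x$ increases some floor term ($\lfloor\hat x_j\rfloor>\lfloor x_j\rfloor$ for some $j\in J$), in which case raising $x_j$ alone to the next integer satisfies $S'_i$ at cost at most $c_j(\hat x_j-x_j)\le\distance_c(x,S_i)$; or $\hat x$ increases none of them, in which case the whole deficit is met by the unfloored terms, $\sum_{j\in J'}A_{ij}(\hat x_j-x_j)\ge\mbox{deficit}$, and raising the single $j\in J'$ maximizing $A_{ij}/c_j$ costs at most $\sum_{j\in J'}c_j(\hat x_j-x_j)\le\distance_c(x,S_i)$. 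Your Case~3 contains only the first branch; the second branch is exactly what resolves the counterexample above. Note also that in this second branch the chosen variable may be raised beyond $\hat x_j$, so the bound is the weighted-sum one, not the per-coordinate bound $c_j(\hat x_j-x_j)$ you posited as your ``convenient sufficient condition.'' Adding this dichotomy (and restricting to a relaxation whose non-frozen coordinates are all uncapped) turns your sketch into the paper's proof.
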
\vspace*{-10pt}
\begin{proof}
Consider a particular relaxed constraint $S'_i$ obtained by relaxing
the upper-bound constraints for all $x_j$ with $x_j < u_j$
and enforcing only a minimal subset $J$ of the floor constraints
(while keeping the constraint unsatisfied).
This gives $S'_i$, which is of the form
\begin{equation*}\SPREAD
~ \sum_{j\in J} A_{ij}\lfloor x_j \rfloor 
+ \sum_{j\in J'} A_{ij} x_j \Ge b_i - \sum_{j\in J''} u_j
\end{equation*}
for some $J$, $J'$, and $J''$.

What is the cheapest way to increase $x$ to satisfy $S'_i$?
Increasing any {\em one} term $\lfloor x_j \rfloor $ for $j\in J$ 
is enough to satisfy $S'_i$
(increasing the left-hand side by $A_{ij}$,
which by the minimality of $J$ must be enough to satisfy the constraint).

Or, if no such term increases,
then  $\sum_{j\in J'} A_{ij} x_j$
must be increased by enough so that increase alone 
is enough to satisfy the constraint.
The cheapest way to do that is to increase
just one variable ($x_j$ for $j\in J'$ maximizing $A_{ij}/c_j$).

In sum, for this $S'_i$,
$\distance(x,S'_i)$ is the minimum cost to increase
just {\em one} variable so as to satisfy $S'_i$.
Thus, by its definition,
$\stepsize(x,S_i) \le \distance(x,S'_i)$.
It follows that

$\stepsize(x,S_i) \le \distance(x,S'_i) \le \distance(x,S_i)$.
\qed
\end{proof}

\vspace*{-10pt}

\noindent \hrulefill
\vspace*{-5pt}

\paragraph{Example.} 
Minimize $x_1 + x_2$
subject to  $0.5 x_1 + 3x_2 \ge 5$,
$ x_2 \le 1$,
and ${x_1,x_2} \in \Zp$.
Each variable has cost 1, so each step will increase each variable equally.
There are eight relaxed constraints:
\begin{align}
0.5 x_1 + 3x_2 \ge 5
\label{eq:relaxed1}\\ 
0.5 x_1 + 3\lfloor x_2 \rfloor \ge 5
\label{eq:relaxed2}\\ 
0.5 x_1 + 3\min\{x_2,1\} \ge 5 
\label{eq:relaxed3}\\ 
0.5 x_1 + 3\lfloor\min\{x_2,1\}\rfloor \ge 5
\label{eq:relaxed4}\\ 
0.5 \lfloor x_1\rfloor+ 3x_2 \ge 5
\label{eq:relaxed5}\\ 
0.5 \lfloor x_1\rfloor + 3\lfloor x_2 \rfloor \ge 5
\label{eq:relaxed6}\\ 
0.5 \lfloor x_1\rfloor + \min\{x_2,1\} \ge 5
\label{eq:relaxed7}\\ 
0.5 \lfloor x_1\rfloor + 3\lfloor\min\{x_2,1\}\rfloor \ge 5
\label{eq:relaxed8}
\end{align}
At the beginning, $x_1=x_2=0$.  
No relaxed constraint is satisfied, so $\Phi(x,S)=8$. 
Then $\stepsize(x,S) = 5/3$
(Constraint (\ref{eq:relaxed1}) or (\ref{eq:relaxed5}) would be satisfied by raising $x_2$ by 5/3).
The first step raises $x_1$ and $x_2$ to $5/3$,
reducing $\Phi(x,S)$ to $6$. 

For the second step, $\stepsize(x,S) = 1/3$
(Constraint (\ref{eq:relaxed2}) or (\ref{eq:relaxed6}) would be satisfied by raising $x_2$ by 1/3).
The step raises both variables by $1/3$ to $2$,
lowering $\Phi(x,S)$ to $4$. 

For the third step, $\stepsize(x,S) = 2$, 
(Constraint (\ref{eq:relaxed3}), (\ref{eq:relaxed4}),
(\ref{eq:relaxed7}), or (\ref{eq:relaxed8}) would be satisfied by raising $x_1$ by 2).
The step raises both variables by 2, to 4,
decreasing $\Phi(x,S)$ to $0$.

All constraints are now satisfied,
and the algorithm returns $x_1 = \lfloor x_1 \rfloor = 4$ and
$x_2 = \lfloor \min\{x_2,1\} \rfloor = 1$.

\noindent \hrulefill

\subsection{Distributed and Parallel Implementations for $\ratio=2$}\label{sec:dist ratio 2}
This section describes a distributed implementation of the above sequential algorithm
for \prob{CMIP$_2$} --- the special case of \prob{CMIP} with $\ratio=2$.
The algorithm (\ref{alg:distributed_two}) generalizes \ref{alg:distributed_vc}.

\begin{figure*}[t]
\begin{alg}
\Ahead{\bf Distributed 2-approximation algorithm for CMIP$_2$
$(c, A, b, u, I)$}
\Alabel{alg:distributed_two}

\Along{At each node $v \in V$: initialize $x_v \leftarrow 0$;
\\if there are unmet constraints $S$ that depend only on $x_v$,
do $\step(x,S)$ for the one maximizing $\stepsize(x,S)$.}

\smallskip
\A Until all vertices are finished, perform rounds as follows:
\algbeg

\Along{At each node $v$: if $v$'s constraints are all met, finish
(round $x_v$ down to $\min(x_v,u_v)$, or $\lfloor \min(x_v,u_v)\rfloor$ if $v\in I$);
\\otherwise, choose to be a {\em leaf} or a {\em root}, each with probability 1/2.}

\smallskip
\Along{Each leaf $v$ does:
for each unmet constraint $S$ 
that can be hit by $x_v$ (Defn.~\ref{defn:hit}),
label $S$ {\em active} if $S$'s other variable is $x_w$ for a root $w$;
choose, among these active constraints,
a random one to be $x_v$'s {\em star constraint} (rooted at $w$).}

\smallskip
\Along{Each root $w$ does either $\heads(w)$ or $\tails(w)$ below, each with probability 1/2.}
\algend

\algsep

\Ahead{\heads$(w)$:}

\Along{For each star constraint $S$ rooted at $w$,
let $t_S$ be the minimum threshold such that increasing $x_w$ to $t_S$
would either hit $S$ (i.e., decrease $\Phi(x,S)$)
or make it so $S$'s leaf variable $x_v$ could no longer hit $S$ (and $x_w$ could).
\\If there is no such value, then take $t_S=\infty$.}

\smallskip
\Along{For each star constraint $S$ rooted at $w$, in order of decreasing $t_S$, do the following:
\\\tab If $x_w< t_S$ then do $\step(x,S)$ (hitting $S$);
otherwise, stop the loop and do the following:
\\Among the star constraints rooted at $w$ that have not yet been hit this round, let $S_r$ (the ``runt'') 
be one maximizing $\stepsize(x,S_r)$.
Do $\step(x,S_r)$ (hitting $S_r$ and all not-yet-hit star constraints rooted at $w$).}

\medskip

\algsep

\Ahead{\tails$(w)$:}
\A Determine which constraint $S_r$ would be the runt in $\heads(w)$. Do $\step(x,S_r)$.
\end{alg}
\vspace{-2em}
\end{figure*}

We assume the network in which the distributed computation takes place
has a node $v$ for every variable $x_v$,
with an edge $(v,w)$ for each constraint $S$ that depends on variables $x_v$ and $x_w$.
(The computation can easily be simulated on, say, a network with
vertices for constraints and edges for variables, or a bipartite 
network with vertices for constraints and variables.)

In \ref{alg:distributed_vc}, a constant fraction of the edges are likely to be covered each round
because a step done for one edge can cover not just that edge, 
but many others also.
The approach here is similar.
Recall the definition of $\Phi(x,S)$ in the definition of $\stepsize()$.
The goal is that the total potential of all constraints,
$\Phi(x) = \sum_S \Phi(x,S)$, should decrease by a constant fraction in each round.

\begin{definition}\label{defn:hit}\em
Say that a constraint $S$ is {\em hit} during the round
when its potential $\Phi(x,S)$ decreases as the result of some step.

By the definition of $\stepsize()$,
for any $x$ and any constraint $S$
there is at least one variable $x_v$ such that raising {\bf just $x_v$} 
to $x_v+\stepsize(x,S)/c_v$ would be enough to hit $S$.
Say such a variable $x_v$ {\em can hit} $S$ (given the current $x$).
\end{definition}
The goal is that a constant fraction of the unmet constraints should be hit in each round.

Note that the definition implies, for example, that, among constraints
that can be hit by a given variable $x_v$,
doing a single step for the constraint $S$ maximizing $\stepsize(x,S)$ 
will hit all such constraints.
Likewise, doing a single step for a random such constraint
will hit in expectation at least half of them
(those with $\stepsize(x,S') \le \stepsize(x,S)$).

\smallskip
In each round of the algorithm,
each node randomly chooses to be a {\em leaf} or a {\em root}.
Each (two-variable) constraint is {\em active} if one of its variables
$x_v$ is a leaf and the other, say $x_w$, is a root, 
and the leaf $x_v$ can hit the constraint at the start of the round.
(Each unmet constraint is active with probability at least 1/4.)
Each leaf $v$ chooses one of its active constraints at random
to be a {\em star constraint}.
Then each root $w$ does (randomly) either $\heads(w)$ or $\tails(w)$,
where $\heads(w)$ does steps for the star constraints rooted at $w$
in a particular order;
and $\tails(w)$ does just one step for the last star constraint
that $\heads(w)$ would have done a step for (called $w$'s ``runt'').

As $\heads(w)$ does steps for the star constraints rooted at $w$, $x_w$ increases.
As $x_w$ increases, the status of a star constraint $S$ rooted at $w$ can change:
it can be hit by the increase in $x_w$
or it can cease to be hittable by $x_v$
(and instead become hittable by $x_w$).
For each constraint $S$, define threshold $t_S$ 
to be the minimum value of $x_w$ at which $S$'s would have such a status change.
Then $\heads(w)$ does steps in order of decreasing $t_S$
until it reaches a constraint $S$ with $x_w \ge t_S$.
At that point, each of $w$'s not-yet-hit star constraints $S$ has $t_S \le x_w$,
and can still be hit by $x_w$.
(As $x_w$ increases, once $S$ changes status,  $S$ will be hittable by $x_w$ 
at least until $S$ is hit.)
Then $\heads(w)$ does $\step(x,S_r)$ for the ``runt'' constraint $S_r$ ---
the one, among $w$'s not-yet-hit star constraints, maximizing $\stepsize(x,S_r)$.
This step hits all of $w$'s not-yet-hit star constraints.
See \ref{alg:distributed_two} for details.

\begin{theorem}\label{thm:distributed_two}
For 2\hyp{approximating} \prob{covering mixed integer linear programs} with at most two variables per constraint (CMIP$_2$):

\smallskip
\noindent
(a) there is a distributed algorithm 
running in $O(\log |\calC|)$ rounds in expectation and with high probability,
where $|\calC|$ is the number of constraints.

\smallskip
\noindent
(b) there is a parallel algorithm in ``Las Vegas'' RNC.
\end{theorem}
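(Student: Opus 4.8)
The plan is to mirror the proof of Theorem~\ref{thm:distributed_vc}, replacing the count of not-yet-covered edges by the total potential $\Phi(x)=\sum_S \Phi(x,S)$ and ``covering an edge'' by ``hitting a constraint'' in the sense of Definition~\ref{defn:hit}. Since by Observation~\ref{obs:hitting} each constraint is hit at most $4^{\ratio}=16$ times before it is satisfied, it suffices to show that in each round a constant fraction of the still-unmet constraints is hit in expectation; the passage from this to the $O(\log|\calC|)$ bound, in expectation and with high probability (via the lower-tail / Azuma argument of footnote~\ref{footnote:karp}), is then identical to the vertex-cover case. I would begin by reusing the activation step verbatim: each unmet constraint $S=(v,w)$ is active with probability at least $1/4$, since by Definition~\ref{defn:hit} one of its variables can hit it and with probability $1/4$ that variable is a leaf and the other a root, so by Markov's inequality a constant fraction of the unmet constraints are active with constant probability. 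I then fix the leaf/root assignment and the star constraints chosen by every leaf other than a generic leaf $v$, and reduce to showing that in expectation a constant fraction of $v$'s active constraints is hit.

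This reduction is where the threshold $t_S$ and the runt $S_r$ of \ref{alg:distributed_two} do the work, and the argument splits into two cases analogous to Case~1 and Case~2 of Theorem~\ref{thm:distributed_vc}. In the analogue of Case~1, at least half of $v$'s active constraints $S=(v,w)$ have the property that, during $\heads(w)$, the increases in $x_w$ forced by $w$'s \emph{other} star constraints already raise $x_w$ past the point at which $S$ is hit through $x_w$ alone; such an $S$ is then hit whenever $w$ plays heads, \emph{regardless} of $v$'s own choice, hence with probability at least $1/2$. In the analogue of Case~2, at least half of $v$'s active constraints $S$ are still hittable by $x_v$ when $w$ would process them (i.e.\ $x_w<t_S$), so that the step performed for $v$'s chosen star constraint raises $x_v$ by at least its start-of-round $\stepsize(x,S)/c_v$; by the bulk-hitting remark following Definition~\ref{defn:hit}, choosing such an $S$ uniformly at random and stepping it hits, in expectation, at least half of $v$'s active constraints of no larger step size. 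A constraint chosen this way is stepped either inside the $\heads(w)$ loop (if it is not the runt) or by $\tails(w)$ (if it is the runt), so such a step is performed with probability at least $1/2$. Either case yields that a constant fraction of $v$'s active constraints is hit in expectation; summing over leaves by linearity of expectation gives the per-round bound.

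I expect the second paragraph's case analysis to be the main obstacle. Unlike vertex cover, a constraint can be hit through \emph{either} of its two variables, and whether the step for $v$'s chosen constraint raises $x_v$ by the full start-of-round step size or is instead pre-empted by the growth of $x_w$ is precisely what the threshold $t_S$ controls. Getting the two cases to partition $v$'s active constraints cleanly, and verifying that the runt mechanism makes ``hit in $\heads(w)$'' and ``hit in $\tails(w)$'' jointly account for the boundary constraint---exactly as the ``last edge'' did in \ref{alg:distributed_vc}---is the delicate point; once it is in hand everything else transfers from Theorem~\ref{thm:distributed_vc}.

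For part~(b) I would follow Theorem~\ref{thm:distributed_vc}(b): the only non-parallel part of a round is the loop inside $\heads(w)$, whose iterations are coupled through the monotone growth of $x_w$. Computing each threshold $t_S$ and each start-of-round $\stepsize(x,S)$ locally, sorting the star constraints of $w$ by decreasing $t_S$, and running a prefix-sum over the induced increments of $x_w$ identifies in parallel exactly which constraints $\heads(w)$ steps and which is the runt $S_r$; all but that last step are then mutually independent and can be executed simultaneously, giving an NC implementation of a round. Simulating $O(\log|\calC|)$ rounds and returning $x$ if the simulation has halted and ``fail'' otherwise yields the Las Vegas RNC algorithm.
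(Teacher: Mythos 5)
Your overall strategy matches the paper's: replace edges by constraints, count progress with the potential $\Phi(x)=\sum_S\Phi(x,S)$, show a constant fraction of an arbitrary leaf's active constraints is hit in expectation, and implement $\heads(w)$ in NC by a prefix-sum over the increments of $x_w$. The activation step, the reduction to a single leaf $v$, Case~1, and all of part~(b) are essentially the paper's argument. The gap is in your merged Case~2. You order $v$'s remaining active constraints by their \emph{start-of-round} step size and claim that the step performed for $v$'s chosen star constraint $S$ raises $x_v$ by at least $\stepsize(x,S)/c_v$ computed at the start of the round. That is false when the step is executed inside the $\heads(w)$ loop: by the time $\heads(w)$ reaches $S$, the other star constraints rooted at $w$ may already have raised $x_w$, and $\stepsize(\cdot,S)$ is non-increasing as $x$ grows, so the step can raise $x_v$ by strictly less than the start-of-round amount (e.g.\ for $3x_v+x_w\ge 3$ with unit costs, the start-of-round step size is $1$, but after $x_w$ has risen to $2$ the step raises $x_v$ by only $1/3$). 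Consequently a constraint $S'$ with smaller start-of-round step size than $S$ need not be hit by $x_v$'s actual increase, and your ``bulk-hitting'' conclusion fails. A second, related problem: hitting $S'$ through $x_v$'s increase alone is only guaranteed in the tails/runt scenario (where the step uses the start-of-round $x$); in the heads scenario the hit of $S'$ also requires $x_{w'}$ to have risen as it would have under $\heads(w')$, so you must additionally condition on $S'$'s own root flipping heads.

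This is exactly why the paper splits your Case~2 into two further cases rather than unifying them by step size. For constraints that would be stepped as non-runts it orders them by $h(S)$, the value to which $\heads(w)$ \emph{would} raise $x_v$ (which already accounts for the prior growth of $x_w$), and shows $S$ is hit when $x_v$ reaches $h(S)$ \emph{and} $S$'s root does heads (probability $1/2$, costing an extra constant factor). For constraints that would be the runt it orders them by $t(S)$, the value to which $\tails(w)$ would raise $x_v$; here reaching $t(S)$ alone suffices, with no condition on $S$'s root's coin. Since every active constraint falls into one of the three categories (not stepped by $\heads(w)$; stepped as a non-runt; the runt), one category contains at least a third of them, and each category admits its own hitting argument. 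Your instinct that ``getting the two cases to partition cleanly'' is the delicate point was right, but the fix is not a cleaner partition into two classes --- it is the third case together with the scenario-specific orderings $h(\cdot)$ and $t(\cdot)$ in place of the start-of-round step size.
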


The next lemma is useful in the proof of the theorem to follow.
\begin{lemma}
The total potential $\sum_{S_i} \Phi(x,S_i)$ 
decreases by a constant factor in expectation
with each round.
\end{lemma}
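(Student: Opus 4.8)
The plan is to mirror the analysis of \ref{alg:distributed_vc} in Theorem~\ref{thm:distributed_vc}(a), replacing ``covering an edge'' by ``hitting a constraint'' (Definition~\ref{defn:hit}) and ``$v$ enters the cover'' by ``$x_v$ is raised enough to hit many of $v$'s constraints at once.'' First I would reduce the potential statement to a statement about \emph{counts}. Every constraint $S$ has at most $4^{\ratio}=16$ relaxed versions, so $\Phi(x,S)\le 16$; and an unmet $S$ leaves its un-relaxed version violated, so $\Phi(x,S)\ge 1$. Hence $|U|\le\Phi(x)\le 16|U|$, where $U$ is the set of currently unmet constraints. Since each hit lowers some $\Phi(x,S)$ by at least one and distinct constraints contribute additively, it suffices to show that the expected number of constraints hit in a round is at least a constant fraction of $|U|$.

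Next I would set up the per-leaf accounting exactly as in the vertex-cover proof. By Definition~\ref{defn:hit} every unmet constraint has a variable that can hit it, so each unmet $S$ is \emph{active} (that variable is a leaf, the other endpoint a root, and the leaf can hit $S$ at the start of the round) with probability at least $1/4$. I then fix a leaf $v$, condition on the leaf/root assignment and on the star constraints chosen by all \emph{other} leaves, and let $\calA_v$ be the resulting set of $v$'s active constraints. The key step is to show $E[\,\#\{S\in\calA_v:\ S\text{ is hit}\}\,]\ge \tfrac{1}{4}|\calA_v|$ over the remaining randomness ($v$'s star choice and the roots' coins). Summing over $v$ by linearity (each active constraint is charged to its unique leaf) then gives $E[\#\text{hit}]\ge \tfrac{1}{4}E[\#\text{active}]\ge |U|/16$, so $\Phi$ falls by at least $|U|/16\ge \Phi/256$ in expectation.

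The heart is the per-leaf bound, which I would prove by the same two-case dichotomy as in \ref{alg:distributed_vc}. For each $S=(v,w)\in\calA_v$, consider the hypothetical in which $v$ picks $S$ and $w$ runs $\heads(w)$. Either (A) $\heads(w)$ hits $S$ by the increase of $x_w$ (once $x_w$ passes the threshold $t_S$, either directly or via the runt step), or (B) $\heads(w)$ performs a full leaf step $\step(x,S)$, raising $x_v$ by $\stepsize(x,S)/c_v$; one class contains at least half of $\calA_v$. In case (A) I would argue, as in vertex-cover Case~1, that such an $S$ is hit whenever $w$ flips heads \emph{regardless} of $v$'s choice, because the increase of $x_w$ causing the hit comes from $w$'s other star constraints; as each such $S$ is then hit with probability at least $1/2$, at least $|\calA_v|/4$ are hit in expectation. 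In case (B) I would invoke the ``random step hits half'' observation following Definition~\ref{defn:hit}: raising $x_v$ by $\stepsize(x,S)/c_v$ hits every $S'\in\calA_v$ with $\stepsize(x,S')\le\stepsize(x,S)$, so a uniformly random pick $S$ followed by its leaf step hits at least half of $\calA_v$ in expectation, and the coin deciding $\heads(w)$ versus $\tails(w)$ ensures that leaf step is actually taken with probability at least $1/2$.

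The main obstacle is the boundary between the two cases, namely the runt constraint $S_r$. In case (A) the clean ``independent of $v$'s choice'' claim can fail precisely for the constraint of largest $\stepsize$, since removing it from $v$'s star set changes which constraint becomes the runt and how far $x_w$ is driven; symmetrically, in case (B) the leaf step for $v$'s pick is only guaranteed when that pick is processed before the stopping point \emph{or} is itself the runt. This is exactly the role of $\tails(w)$: just as $\tails$ in \ref{alg:distributed_vc} performs the single step for the ``last edge'' so that $v$ still enters the cover when that step would have been last, here $\tails(w)$ performs $\step(x,S_r)$, which raises $x_v$ fully when $S_r$ is $v$'s pick, covering the boundary with probability $1/2$. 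Verifying that the definition of $t_S$ makes the two cases genuinely exhaustive, and that the correlation between ``a leaf step is taken'' and $\stepsize(S)$ does not spoil the random-half bound, is the delicate part; the bookkeeping is heavier than in the vertex-cover case because a single step can change the status of, and hit, several constraints at once.
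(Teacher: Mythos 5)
Your overall architecture matches the paper's: reduce the potential drop to a count of hit constraints (using $1\le\Phi(x,S)\le 4^{\ratio}$), fix a leaf $v$, condition on the other leaves' star choices, and split $v$'s active constraints according to what a hypothetical $\heads(w)$ run would do with each one. Your case (A) is the paper's Case~1 and your argument for it is sound. The gap is in case (B), where you merge what the paper deliberately keeps as two separate cases. You assume that when a step is actually performed for $v$'s pick $S$, it raises $x_v$ by $\stepsize(x,S)/c_v$ evaluated at the \emph{start of the round}, and you then order $\calA_v$ by start-of-round $\stepsize$ to conclude that half of $\calA_v$ is hit in expectation. But $\heads(w)$ re-evaluates $\stepsize(x,S)$ at the moment it does $\step(x,S)$, after $x_w$ has already been raised by the steps for $w$'s earlier star constraints; raising $x_w$ can strictly decrease the cost of hitting $S$ through $x_v$ (for a relaxation such as $0.5x_v+3x_w\ge 5$, the required increase of $x_v$ shrinks once $x_w>0$) while still keeping $x_w<t_S$. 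So the actual increase of $x_v$ can be strictly smaller than the start-of-round $\stepsize(x,S)/c_v$, and then constraints $S'$ with start-of-round $\stepsize(x,S')\le\stepsize(x,S)$ need not be hit. Your closing caveat about ``correlation'' and the runt boundary gestures near this, but the real failure point is the re-evaluation of the step size, not the boundary.

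The paper resolves this by splitting your case (B) into a non-runt case (its Case~2) and a runt case (its Case~3). For the runt, $\tails(w)$ performs its single step on the unmodified $x$, so ordering by the value $t(S)$ to which $x_v$ would rise coincides with your start-of-round ordering and your argument goes through there. For non-runt steps, the paper instead orders by $h(S)$, the value to which the hypothetical $\heads(w)$ would drive $x_v$, and --- crucially --- it does \emph{not} claim the hit follows from raising $x_v$ alone: it additionally requires the root $w$ of the to-be-hit constraint $S$ to flip heads, so that $x_w$ is also driven at least to the value it had just before the hypothetical $\step(x,S)$; monotonicity of $\Phi$ in both coordinates then yields the hit, at the cost of only another factor of $1/2$ in probability. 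You would need to import both devices to close case (B).
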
\vspace*{-10pt}
\begin{proof}
Any unmet constraint is active with probability at least 1/4,
so with constant probability the potential of the active edges
is a constant fraction of the total potential.  Assume this happens.
Consider an arbitrary leaf $v$.
It is enough to show that in expectation a constant fraction
of $v$'s active constraints are hit (have their potentials decrease) during the round.
To do so, condition on any set of choices of star constraints by the {\em other} leaves,
so the only random choices left to be made are $v$'s star-constraint choice
and the coin flips of the roots.
Then (at least) one of the following three cases must hold:

\paragraph{Case 1.}
{\em A constant fraction of $v$'s active constraints $S$ have the following property:
if $v$ were to choose $S$ as its star constraint, 
and the root $w$ of $S$ were to do $\heads(w)$, 
then $\heads(w)$ would {\em not} do $\step(x,S)$.}

Although $\heads(w)$ would not do $\step(x,S)$ for such an $S$,
it nonetheless would hit $S$:
just before $\heads(w)$ does $\step(x,S_r)$,
then $x_w \ge t_S$,
so either $S$ has already been hit (by the increases in $x_w$)
or will be hit by $\step(x,S_r)$
(because $x_w$ can hit $S$
and, by the choice of $S_r$, 
 $\step(x,S_r)$
increases $x_w$ by $\stepsize(x,S_r)/c_w \ge \stepsize(x,S)/c_w$).

On consideration,  for a constraint $S$ with the assumed property,
the steps done by $\heads(w)$ will be the same
even if $v$ chooses some constraint $S'$ with a root other than $w$ as its star constraint.
(Or, if $v$ chooses a constraint $S'\neq S$ that shares root $w$ with $S$,
the steps done by $\heads(w)$ will still raise $x_w$ by as much as
they would have had $v$ chosen $S$ for its star constraint.)
Thus, for such a constraint $S$,
$\heads(w)$ (which $w$ does with probability at least 1/2)
will hit $S$ {\em whether or not} $v$ chooses $S$ as its star constraint.

If a constant fraction of $v$'s active constraints have the assumed property,
then a constant fraction of $v$'s active constraints will be hit with 
probability at least 1/2, so in expectation a constant fraction
of $v$'s active constraints will be hit.

\paragraph{Case 2.}
{\em A constant fraction of $v$'s active constraints $S$
have the following property:
if $v$ were to choose $S$ as its star constraint,
and the root $w$ of $S$ were to do $\heads(w)$,
then $\heads(w)$ would do $\step(x,S)$ when $x_w < t_S$ ($S$ would not be the runt).}

Let $\calH$ denote the set of such constraints.
For $S\in\calH$ let $h(S)$ be the value to which $\heads(w)$ 
(where $w$ is the root of $S$) would increase $x_v$.
Whether or not $v$ chooses $S$ as its star constraint,
if $x_v$ increases to $h(S)$ in the round
and $w$ does $\heads(w)$,
then $S$ will be hit.

Let $S$ and $S'$ be any two constraints in $\calH$ where $h(S) \ge h(S')$.
Let $w$ and $w'$, respectively, be the root vertices of $S$ and $S'$.
(Note that $w=w'$ is possible.)
If $v$ chooses $S'$ as its star constraint
and $w$ and $w'$ both do $\heads()$, 
then $S$ will be hit
(because $x_v$ increases to at least $h(S') \ge h(S)$
and $\heads(w)$ still increases $x_w$ at least to the value
it would have had just before $\heads(w)$ would have done $\step(x,S)$,
if $v$ {\em had} chosen $S$ as its star constraint).

Since (in the case under consideration) a constant fraction of $v$'s active constraints are in $\calH$,
with constant probability $v$ chooses some constraint $S'\in\calH$ as its star constraint
and the root $w'$ of $S'$ does $\heads(w')$.
Condition on this happening.
Then the chosen constraint $S'$ is uniformly random in $\calH$,
so, in expectation, a constant fraction of the constraints $S$ in $\calH$ are hit
(because $h(S) \le h(S')$ and the root $w$ of $S$ also does $\heads(w)$).

\paragraph{Case 3.}
{\em A constant fraction of $v$'s active constraints $S$
have the following property:
if $v$ were to choose $S$ as its star constraint,
and the root $w$ of $S$ were to do $\tails(w)$,
then $\tails(w)$ would do $\step(x,S)$ ($S$ would be the runt).}

Let $\calT$ denote the set of such constraints.
For $S\in\calT$ let $t(S)$ be the value to which $\tails(w)$ 
(where $w$ is the root of $S$) would increase $x_v$.
Whether or not $v$ chooses $S$ as its star constraint,
if $x_v$ increases to $t(S)$ in the round
then $S$ will be hit
(whether or not $w$ does $\tails(w)$).

Let $S$ and $S'$ be any two constraints in $\calT$ where $t(S') \ge t(S)$.
Let $w$ and $w'$, respectively, be the root vertices of $S$ and $S'$.
(Again $w=w'$ is possible.)
If $v$ chooses $S'$ as its star constraint
and $w'$ does $\tails(w')$, 
then (because $x_v$ increases to at least $t(S') \ge t(S)$)
$S$ will be hit.

Since (in the case under consideration) a constant fraction of $v$'s active constraints are in $\calT$,
with constant probability $v$ chooses some constraint $S'\in\calT$ as its star constraint
and the root $w'$ of $S'$ does $\tails(w')$.
Condition on this happening.
Then the chosen constraint $S'$ is uniformly random in $\calT$,
so, in expectation, a constant fraction of the constraints $S$ in $\calT$ are hit
(because $t(S) \le t(S')$).
This proves the lemma. \qed
\end{proof}

\begin{proof}[Thm.~\ref{thm:distributed_two}, part (a)]
The lemma implies that the potential decreases in expectation
by a constant factor each round.
As the potential is initially $O(|\calC|)$ and non-increasing,
standard arguments
(see Footnote~\ref{footnote:karp}) imply
that the number of rounds before the potential is
less than 1 (and so $x$ must be feasible) is $O(\log |\calC|)$
in expectation and with high probability.

This completes the proof of Thm.~\ref{thm:distributed_two}, part (a).\qed
\end{proof}

\subsection*{Parallel (RNC) implementation}
\begin{proof}[Thm.~\ref{thm:distributed_two}, part (b)]
To adapt the proof of (a) to prove part (b), 
the only difficulty is implementing step (2) of $\heads(w)$ in NC.
This can be done using the following observation.
When $\heads(w)$ does $\step(x,S_k)$ for its $k$th star constraint (except the runt),
the effect on $x_w$ is the same as setting
$x_w \leftarrow f_k(x_w)$
for a linear function $f_k$
that can be determined at the start of the round.
By a prefix-sum-like computation,
compute, in NC, for all $i$'s, the functional composition
$F_k = f_k \circ f_{k-1}\circ \cdots \circ f_1$.
Let $x^0_w$ be $x_w$ at the start of the round.
Simulate the steps for all constraints $S_k$ in parallel
by computing $x^k_w = F_k(x^0_w)$,
then, for each $k$ with $x^{k-1}_w < t_{S_k}$,
set the variable $x_v$ of $S_k$'s leaf  $v$
by simulating $\step(x,S_k)$ 
with $x_w = x^{k-1}_w$.
Set $x_w$ to $x^k_w$ for the largest $k$ with $x^{k-1}_w < t_{S_k}$.
Finally, determine the runt $S$ and do $\step(x,S)$.
This completes the description of the NC simulation of $\heads(w)$.

The RNC algorithm will simulate some $c \log |\calC|$ rounds 
of the distributed algorithm, where $c$ is chosen so the
probability of termination is at least 1/2.
If the distributed algorithm terminates in that many rounds,
the RNC algorithm will return the computed $x$.
Otherwise the RNC algorithm will return ``fail''.

This concludes the proof of Thm.~\ref{thm:distributed_two}. \qed
\end{proof}

\section{Submodular-cost Covering}
\label{sec:distributed_general}

\noindent
This section describes an efficient distributed algorithm 
for \prob{Submodular-cost Covering}.
Given a cost function $c$ and a collection of constraints $\calC$, 
the problem is to find $x\in\Rp^n$ to
\smallskip\\\centerline{\em
minimize $c(x)$,
subject to $(\forall S\in \calC)~ x\in S$.
}\smallskip\par\noindent
The cost function $c:\Rp^n\rightarrow \Rp$
is non-decreasing, continuous, and submodular.
Each constraint $S\in\calC$ is closed upwards.

\subsection{Sequential Implementation}
\label{sec:sequential_general}

Here is a brief description of the centralized $\ratio$\hyp{approximation} algorithm
for this problem (for a complete description see \cite{Koufogiannakis2009Covering}).

The algorithm starts with $x=\mathbf 0$,
then repeatedly does the following $\step(x,S)$ for any not-yet-satisfied constraint $S$
(below $\vars(S)$ denotes the variables in $x$ that constraint $S$ depends on):

\paragraph{subroutine $\step(x,S)$:}
\smallskip
{\em 
\\ 1. Let $\beta \leftarrow \stepsize(x,S)$.
\\ 2. For $j\in\vars(S)$,
let $x'_j\in\Rp\cup\{\infty\}$ be maximal s.t.
\\\tab raising $x_j$ to $x'_j$ would raise $c(x)$ by at most $\beta$.
\\ 3. For $j\in\vars(S)$, let $x_j\leftarrow  x'_j$.
}

\medskip
\noindent
Let $\distance_c(x,S)$
denote {\em  the minimum cost of augmenting $x$ to satisfy $S$},
$\min\{ c(\hat x)-c(x) : \hat x\in S,\hat x\ge x\}$.

\begin{observation}(\cite{Koufogiannakis2009Covering})\label{obs:general_stepsize}
If $\stepsize(x,S) \le \distance_c(x,S)$ in each step,
and the algorithm terminates,
then it returns a $\ratio$-approximate solution.
\end{observation}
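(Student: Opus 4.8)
The plan is to generalize the potential argument sketched for Observation~\ref{obs:stepsize} from the linear cost $c\cdot x$ to an arbitrary non-decreasing, continuous, submodular $c$. Fix any optimal solution $x^*$ with $c(x^*)=\opt$, and track the potential $\Phi(x)=c(x\vee x^*)-c(x)$, which plays the role of $\sum_v c_v\max(0,x^*_v-x_v)$ in the linear case (the two coincide when $c$ is linear). Since $c$ is non-decreasing and $x\vee x^*\ge x$, we have $\Phi(x)\ge 0$ throughout, and initially $\Phi(\mathbf 0)=c(x^*)-c(\mathbf 0)=\opt-c(\mathbf 0)$. I would then establish two facts for each call to $\step(x,S)$ with $\beta=\stepsize(x,S)$: (i) the cost $c(x)$ rises by at most $\ratio\,\beta$, and (ii) the potential $\Phi(x)$ falls by at least $\beta$. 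Granting these, since $\Phi$ starts at $\opt-c(\mathbf 0)$ and never goes negative, the step sizes sum to at most $\opt-c(\mathbf 0)$, so telescoping gives $c(x^{\text{final}})=c(\mathbf 0)+\sum_t[\text{increase}_t]\le c(\mathbf 0)+\ratio(\opt-c(\mathbf 0))\le\ratio\,\opt$, the last inequality using $c(\mathbf 0)\ge 0$ and $\ratio\ge 1$.

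Part (i) is the easy half. Step~3 raises the coordinates in $\vars(S)$ simultaneously from $x$ to $x'$, where each $x'_j$ is the maximal value (Step~2) for which raising $x_j$ \emph{alone} to $x'_j$ costs at most $\beta$. Submodularity bounds the joint marginal cost by the sum of the individual marginal costs, so $c(x')-c(x)\le\sum_{j\in\vars(S)}\beta=|\vars(S)|\,\beta\le\ratio\,\beta$.

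For part (ii), let $x$ and $x'$ be the vectors before and after the step and set $\tilde x=x'\wedge(x\vee x^*)$. Writing the drop as $[c(x')-c(x'\vee x^*)]+[c(x\vee x^*)-c(x)]$ and applying the submodular inequality $c(p)+c(q)\ge c(p\wedge q)+c(p\vee q)$ to the pair $p=x'$, $q=x\vee x^*$ (noting $x'\vee(x\vee x^*)=x'\vee x^*$ since $x'\ge x$) reduces the whole drop to the clean lower bound $c(\tilde x)-c(x)$. It then remains to prove $c(\tilde x)-c(x)\ge\beta$, which I expect to be the main obstacle and which I would handle by a dichotomy. If $\tilde x\in S$, then since $\tilde x\ge x$ the feasibility of $\tilde x$ for the minimization defining $\distance_c$ gives $c(\tilde x)-c(x)\ge\distance_c(x,S)\ge\beta$. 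Otherwise, because $S$ depends only on $\vars(S)$, is closed upwards, and contains $x^*$, some coordinate $j_0\in\vars(S)$ must satisfy $x'_{j_0}<x^*_{j_0}$ (if instead $x'_j\ge x^*_j$ held for all $j\in\vars(S)$, then $\tilde x\ge x^*$ on $\vars(S)$, forcing $\tilde x\in S$); for that coordinate the maximality in Step~2 together with the continuity of $c$ forces raising $x_{j_0}$ alone to $x'_{j_0}$ to cost exactly $\beta$, and since $\tilde x$ dominates that single-coordinate raise, monotonicity yields $c(\tilde x)-c(x)\ge\beta$.

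The delicate points I would watch are the edge case $x'_j=\infty$ (which arises only when raising $x_j$ never accumulates cost $\beta$, so automatically $x'_j\ge x^*_j$ and that coordinate lands on the ``reached $x^*$'' side of the dichotomy) and the use of continuity to convert the maximality of $x'_{j_0}$ into the exact equality ``cost $=\beta$.'' Everything else---the non-negativity of $\Phi$, the telescoping of the step sizes against $\Phi(\mathbf 0)$, and the two invocations of submodularity---is routine once the dichotomy establishing $c(\tilde x)-c(x)\ge\beta$ is in place.
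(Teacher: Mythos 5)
Your proposal is correct and follows essentially the same route as the paper, which only sketches the linear case (charging the $\ratio\beta$ cost increase of a step against a $\beta$ decrease in a potential measuring the remaining distance to an optimal $x^*$) and defers the full submodular argument to \cite{Koufogiannakis2009Covering}. Your potential $c(x\vee x^*)-c(x)$ is exactly the submodular generalization of $\sum_j c_j\max(0,x^*_j-x_j)$, and the dichotomy you use to establish the $\beta$ drop (either $\tilde x\in S$ and $\distance_c$ applies, or some $j_0\in\vars(S)$ has $x'_{j_0}<x^*_{j_0}$ and the maximality in Step~2 forces a full-$\beta$ single-coordinate cost) correctly fills in the details the paper omits.
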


\noindent
\begin{proof}[sketch, for linear $c$]
Each step starts with $x \not\in S$.
Since the optimal solution $x^*$ is in $S$ and $S$ is closed upwards,
there must be {\em at least one} $k\in\vars(S)$ such that $x_{k} < x^*_{k}$.
Thus, while the algorithm increases the cost of $x$ by at most $\ratio\beta$,
it decreases the potential $\sum_j c_j \max(0, x^*_j - x_j)$ by at least $\beta$.
Full proof in \cite{Koufogiannakis2009Covering}.
\qed
\end{proof}

\paragraph{The function $\stepsize(x,S)$.}
One generic way to define $\stepsize(x,S)$ 
is as the minimum $\beta$ such that $\step(x,S)$ 
will satisfy $S$ in one step. 
This choice satisfies the requirement 
$\stepsize(x,S) \le \distance_c(x,S)$
in Observation~\ref{obs:general_stepsize}
\cite{Koufogiannakis2009Covering}.
But the sequential algorithm above,
and the distributed algorithm described next,
will work correctly with any $\stepsize()$
satisfying the condition in Observation~\ref{obs:general_stepsize}.
For an easy-to-compute and efficient 
$\stepsize()$ function for \prob{CMIP}, 
see \cite{Koufogiannakis2009Covering}.

\subsection{Distributed Implementation}\label{sec:dist general}
Say that the cost function $c(x)$ is {\em locally computable} if,
given any assignment to $x$ and any constraint $S$,
the increase in $c(x)$ due to  $\step(x,S)$ raising the variables in $S$
can be determined knowing
only the values of those variables ($\{x_j~|~j\in\vars(S)\}$).
Any linear or separable cost function is locally computable.

We assume the distributed network
has a node for each constraint $S\in\calC$, with edges from $S$ to
each node whose constraint $S'$ shares variables with $S$
($\vars(S)\cap\vars(S') \neq\emptyset$).
(The computation can easily be simulated on
a network with nodes for variables 
or nodes for variables and constraints.)
We assume unbounded message size.

\begin{theorem}\label{thm:distributed_general}
For $\ratio$\hyp{approximating} any \prob{Submodular-cost Covering} problem
with a locally computable cost function,
there is a randomized distributed algorithm
taking $O(\log^2 |\calC|)$ communication rounds 
in expectation and with high probability,
where $|\calC|$ is the number of constraints. 
\end{theorem}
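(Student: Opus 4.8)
The plan is to give a distributed implementation of the sequential $\ratio$\hyp{approximation} algorithm of Section~\ref{sec:sequential_general}, mirroring the leaf/root structure used for CMIP$_2$ in \ref{alg:distributed_two} and the proof of Thm.~\ref{thm:distributed_two}, but accounting for the fact that a single $\step(x,S)$ now raises up to $\ratio$ variables rather than two. I would take $\stepsize(x,S)$ to be the generic choice (the least $\beta$ for which $\step(x,S)$ satisfies $S$ in one step), so that every successful step satisfies its constraint outright and the potential can be taken to be simply the number of not-yet-satisfied constraints. The approximation guarantee will then come directly from Observation~\ref{obs:general_stepsize}, \emph{provided} I maintain the invariant that every step actually performed uses a value $\stepsize(x,S)\le\distance_c(x,S)$ evaluated at the $x$ in force when that step executes. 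Local computability of $c$ is exactly what allows each node, knowing only the values $\{x_j : j\in\vars(S)\}$, to evaluate $\stepsize(x,S)$ and the resulting variable increases, so all per\hyp{constraint} quantities are computable with local communication.

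The round bound $O(\log^2|\calC|)$ will have the shape of $O(\log|\calC|)$ outer \emph{phases}, each consuming $O(\log|\calC|)$ rounds, where I show that one phase hits (satisfies) a constant fraction of the currently unsatisfied constraints in expectation. Given such a per\hyp{phase} guarantee, the same tail argument used in the earlier proofs (see Footnote~\ref{footnote:karp}) shows that after $O(\log|\calC|)$ phases the potential drops below $1$, so $x$ is feasible, both in expectation and with high probability. Within a phase I would reuse the leaf/root randomization: each node flips a coin, and a constraint $S$ is \emph{active} when the variable that \emph{can hit} it (in the sense of Defn.~\ref{defn:hit}) is a leaf while its remaining variables are roots; each leaf then commits at random to one of its active star constraints. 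As in the $\ratio=2$ analysis, each unsatisfied constraint is active with constant probability, so a constant fraction of the potential is concentrated on active, committed constraints.

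The main obstacle --- and the reason a phase needs $O(\log|\calC|)$ rounds rather than one --- is that, unlike the two\hyp{variable} case, $\step(x,S)$ raises \emph{all} of $S$'s variables, so two committed steps interfere whenever their constraints share \emph{any} variable. In the $\ratio=2$ case each star constraint touched only its leaf and its single root, so distinct roots operated on disjoint variable sets and a root could sequence its whole star within one round; with $\ratio$ variables a committed constraint touches up to $\ratio-1$ root variables, destroying that clean disjointness. Two concurrent steps sharing a variable must therefore be \emph{sequenced}, the later one recomputing $\stepsize$ at the updated $x$, to preserve the invariant above; executing them simultaneously off the start\hyp{of}\hyp{phase} value of $x$ could violate $\stepsize\le\distance_c$ and break the $\ratio$\hyp{approximation}. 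I would accordingly spend the inner $O(\log|\calC|)$ rounds of a phase on a symmetry\hyp{breaking} subroutine on the \emph{conflict graph} whose nodes are the committed constraints (adjacent when they share a variable): committed constraints draw independent random ranks, and in each inner round a constraint performs its sequenced step only if it is locally rank\hyp{minimal} among not\hyp{yet}\hyp{stepped} constraints touching its variables, after which it and its neighbors retire. A Luby\hyp{style} analysis then shows a constant fraction of the surviving committed constraints retire per inner round, so $O(\log|\calC|)$ inner rounds suffice to step a constant fraction by mutually valid (variable\hyp{disjoint} or properly sequenced) steps.

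Finally, the conditioning-and-cases argument from the proof of Thm.~\ref{thm:distributed_two} --- fix the other leaves' choices, then split on whether a leaf's chosen constraint would be hit regardless of, or only because of, its own increase (the $\heads(w)$/$\tails(w)$ dichotomy) --- carries over to show that, conditioned on being active and surviving conflict resolution, a leaf's constraints are hit with constant probability, which yields the per\hyp{phase} constant\hyp{fraction} bound. The RNC claim would follow as in the earlier proofs by a prefix\hyp{sum} simulation of the sequenced steps together with truncating the simulation at $O(\log^2|\calC|)$ rounds and reporting ``fail'' otherwise. I expect the genuinely delicate point to be the interface between the sequencing forced by validity and the probabilistic hitting argument: one must check that the random ranks used for conflict resolution do not bias which committed constraints survive in a way that spoils the constant\hyp{fraction} expectation, and that the $\stepsize$ values computed for active constraints at the start of a phase remain valid lower bounds on $\distance_c$ once each step is finally sequenced on an $x$ that neighboring steps may already have raised.
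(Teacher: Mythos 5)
Your plan diverges from the paper's at the crucial step, and the step where it diverges is the one that fails. Your per-phase guarantee rests on resolving the conflict graph of committed constraints by a Luby-style rule and concluding that ``$O(\log|\calC|)$ inner rounds suffice to step a constant fraction'' of them. Luby's analysis guarantees that a \emph{maximal independent set} of the conflict graph is found in $O(\log)$ rounds; it does not guarantee that this independent set contains a constant fraction of the nodes. If many unsatisfied constraints all share a single variable (plus $\ratio-1$ private variables each), the conflict graph is a clique, the independent set has size one, and the single step performed need not hit any of the other constraints (its $\beta$ can be arbitrarily small relative to what the others need). So a phase can satisfy only a $1/|\calC|$ fraction of the remaining constraints, and your outer loop degrades to $\Omega(|\calC|)$ phases. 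Your fallback of ``sequencing'' conflicting steps does not rescue this within polylogarithmically many rounds, because the dependency chains forced by shared variables can have length $\Omega(|\calC|)$ and the conflicting constraints live at different nodes. (Separately, the leaf/root notion of ``active'' does not directly generalize: with $\ratio$ variables per constraint, requiring the hitting variable to be a leaf and \emph{all} others roots happens only with probability about $2^{-\ratio}$, not a universal constant — a smaller issue, but another sign the two-variable machinery does not lift.)

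The paper avoids this entirely by changing the algorithm rather than the analysis: each phase runs one phase of the Linial--Saks low-diameter decomposition on the constraint graph with parameter $k=\Theta(\log|\calC|)$. This selects a random subset $\calR$ of constraints partitioned into clusters of diameter $O(k)$ such that distinct clusters share no variables and every constraint lands in $\calR$ with probability $\Omega(1)$ (namely $\ge 1/(c|\calC|^{1/k})$). Each cluster gathers its constraints and current variable values at a leader in $O(k)$ rounds; the leader then performs all the conflicting steps \emph{sequentially in local memory} (this is exactly where local computability of $c$ is used) until every constraint in its cluster is satisfied, and broadcasts the new values back. Because every constraint is satisfied in a phase with constant probability, $O(\log|\calC|)$ phases of $O(\log|\calC|)$ rounds suffice, and the approximation ratio follows from Observation~\ref{obs:general_stepsize} since the steps are genuinely sequential within each cluster and clusters are variable-disjoint. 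Note also that the theorem makes no RNC claim for this general case --- the paper explicitly remarks that the decomposition-based approach does \emph{not} parallelize (a single variable occurring in all constraints collapses the network to diameter one), so the RNC paragraph of your proposal should be dropped.
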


\begin{proof}
To start each phase, the algorithm 
finds large independent subsets of constraints
by running one phase of Linial and Saks' (LS) decomposition algorithm \cite
{Linial93Low-diameter},~\footnote
{The LS decomposition was also used
in approximation algorithms for fractional packing and covering 
by Kuhn et al.~\cite{Kuhn06The-price}.} below,
with any $k$ such that $k \in \Theta(\log|\calC|)$
(in case the nodes do not know such a value see the comment at the end of this
subsection).
A phase of the LS algorithm, for a given $k$, takes $O(k)$ rounds
and produces a random subset $\calR\subseteq \calC$ of the constraints (nodes),
and for each constraint $S\in\calR$ a ``leader'' node $\ell(S)\in \calS$,
with the following properties:
\begin{itemize}
\item
Each constraint in $\calR$ is within distance $k$ of its leader:

$(\forall S\in \calR)~ d(S, \ell(S)) \le k$.
\item
Edges do not cross components:

$(\forall S, S' \in \calR)~ \ell(S) \neq \ell(S') \rightarrow \vars(S) \cap \vars(S') = \emptyset$.
\item
Each constraint has a chance to be in $\calR$:

$(\forall S\in \calC)~ \Pr[S \in \calR] \ge 1/c|\calC|^{1/k}$
for some $c>1$.
\end{itemize}

Next, each constraint $S\in \calR$ sends its information
(the constraint and its variables' values) to its leader $\ell(S)$.
This takes $O(k)$ rounds because $\ell(S)$
is at distance $O(k)$ from $S$.
Each leader then constructs (locally) the subproblem
induced by the constraints that contacted it and the variables
of those constraints, with their current values.
Using this local copy, the leader repeatedly does $\step(x,S)$
for any not-yet-met constraint $S$ that contacted it,
until all constraints that contacted it are satisfied.

\begin{figure}[t]
\begin{alg1}
\Ahead{\bf Distributed algorithm for problems with arbitrary covering constraints and 
submodular cost}
\Alabel{alg:distributed_general}

\A Initialize $x \leftarrow 0$.
\Along{Compute the Linial/Saks decomposition of the constraint graph $G$.
Denote it $B_1,B_2,\ldots,B_{O(\log |\calC|)}$.}
\A For $b=1,2,\ldots, O(\log |\calC|)$, do:

\algbeg
\A  Within each connected component $\calK$ of block $B_b$:
\algbeg
\A Gather all constraints in $\calK$ at the leader $v_\calK$.
\A{At $v_\calK$, do $\step(x,S)$ until $x\in S$ for all $S\in\calK$.}
\A{Broadcast variables' values to all constraints in $\calK$.}
\algend
\algend
\vspace{-1em}
\end{alg1}
\vspace{-2em}
\end{figure}

(By the assumption that the cost is locally computable,
the function $\stepsize(x,S)$ and the subroutine $\step(x,S)$
can be implemented knowing only the constraint $S$
and the values of the variables on which $S$ depends.
Thus, the leader can perform $\step(x,S)$ for each
constraint that contacted it in this phase. 
Moreover, distinct leaders' subproblems do not share variables,
so they can proceed simultaneously.)

To end the phase, each leader $\ell$ returns the updated variable information
to the constraints that contacted $\ell$.
Each constraint in $\calR$ is satisfied in the phase
and drops out of the computation 
(it can be removed from the network and from $\calC$;
its variables' values will stabilize once the constraint
and all its neighbors are finished).

\paragraph{Analysis of the number of rounds.}
In each phase (since each constraint is in $\calR$, and thus satisfied, with 
probability $1/c|\calC|^{1/k}$), the number of remaining constraints
decreases by at least a constant factor $1-1/c|\calC|^{1/k} \le 1-1/\Theta(c)$
in expectation.
Thus, the algorithm finishes in $O(c \log|\calC|)$ phases
in expectation and with high probability
$1-1/|\calC|^{O(1)}$.
Since each phase takes $O(k)$ rounds, this proves the theorem.

\paragraph{Comment.}
If the nodes do not know a value $k\in\Theta(\log |\calC|)$,
use a standard doubling trick.
Fix any constant $d>0$.
Start with $x=\mathbf 0$,
then run the algorithm as described above,
except doubling values of $k$ as follows.
For each $k=1,2,4,8,\ldots$,
run $O_d(k)$ phases as described above with that $k$.
(Make the number of phases enough so that,
if $k\ge \ln |\calC|$,
the probability of satisfying all constraints is at least $1-1/|\calC|^d$.)
The total number of rounds
is proportional to the number of rounds in the last group of $O_d(k)$ phases.

To analyze this modification, consider the first $k \ge \log |\calC|$.
By construction, with probability at least $1-1/|\calC|^d$, all constraints
are satisfied after the $O_d(k)$ phases with this $k$.
So the algorithm finishes in $O_d(\log |\calC|)$ phases
with probability at least $1-1/|\calC|^d$.

To analyze the expected number of rounds,
note that the probability of not finishing
in each subsequent group of phases is at most 
$1/|\calC|^d$, while the number of rounds
increases by a factor of four for each increase in $k$,
so the expected number of subsequent rounds is
at most
$O_d(\log|\calC|) \sum_{i=0}^{\infty} 4^i/|\calC|^{di} = O_d(\log|\calC|)$.
\qed
\end{proof}

We remark without proof that the above algorithm can be derandomized at the expense of increasing the number of rounds to super-polylogarithmic (but still sub-linear), using Panconesi and Srinivasan's deterministic variant of the Linial-Saks decomposition \cite{panconesi1996complexity}.

Also, note that although there are parallel (NC) variants of the Linial-Saks decomposition
\cite[Thm.~5]{awerbuch1994low}, this does not yield a parallel algorithm.
For example, if a single variable occurs in all constraints, the underlying network
is complete, so has diameter 1, and thus can be ``decomposed'' into a single cluster.
In the distributed setting, this case can be handled in $O(1)$ rounds
(by doing all the computation at a single node).
But it is not clear how to handle it in the parallel setting.

\subsection{Applications}\label{sec:apps}
\noindent
As mentioned in the introduction, the covering problem considered in this section
generalizes many covering problems.
For all of these, Thm.~\ref{thm:distributed_general} gives a distributed
$\ratio$-approximation algorithm
running in $O(\log^2 |\calC|)$ communication rounds in expectation and with high probability.

\begin{corollary}\label{thm:facility_distributed}
There is a distributed $\ratio$\hyp{approximation} algorithm for 
\prob{Set Cover}, \prob{CMIP} and (non-metric) \prob{Facility Location}
that runs in $O(\log^2 |\calC|)$ communication rounds 
in expectation and with high probability.
\end{corollary}

If the complexity of the computation
(as opposed to just the number of communication rounds) is important,
these problems have appropriate $\stepsize()$ functions
that can be computed efficiently (generally so that each constraint
can be satisfied with overall work nearly linear in the problem size)
\cite{Koufogiannakis2009Covering}. 

\smallskip
Here is a brief description of the problems mentioned above but not previously defined.

(Non-metric) \prob{Facility Location},
for a bipartite graph  $G=(C,F,E)$ of customers $C$ and facilities $F$,
with assignment costs $d$ and opening costs $f$,
asks to find $x\ge 0$ 
such that $\sum_{j\in N(i)} \lfloor x_{ij}\rfloor \ge 1$
for each customer $i$,
while minimizing
the cost to {\em open} facilities
$\sum_{j\in F} f_j \max_{i\in N(j)} x_{ij}$
plus the cost to {\em assign} customers to them $\sum_{ij\in E} d_{ij} x_{ij}$.
The total cost is submodular.
Each customer has at most $\ratio$ accessible facilities.\footnote
{The standard linear-cost formulation is not a covering one.
The standard reduction to set cover increases $\ratio$ exponentially.}

In \prob{probabilistic CMIP},
the constraints are CMIP constraints and
each constraint has a probability $p_S$ of being active.
The stage-one and stage-two costs are specified by
a matrix $w$ and a vector $c$, respectively.
In stage one, the problem instance is revealed.
The algorithm computes, for each constraint $S\in\calC$,
a ``commitment'' vector $y^S\in S$ for that constraint.
The cost for stage one is
$w\cdot y = \sum_{S,j\in \vars(S)} w^S_j y^S_j$.
In stage two, 
each constraint $S$ is (independently) {\em active}
with probability $p_S$.
Let $\calA$ denote the active constraints.
The final solution $x$ is the minimal vector
covering the active-constraint commitments,
i.e. with $x_j = \max\{ y^S_j : S\in\calA, j\in \vars(S)\}$.
The cost for stage two is the random variable $c\cdot x = \sum_j c_j x_j$.
The problem is to choose $y$ to minimize the total expected cost
$C(y) = w\cdot y + E_{\calA}[c\cdot x]$.

\prob{Probabilistic Facility Location} is a 
special case of \prob{probabilistic CMIP}, where
each customer $i$ is also given a probability $p_i$.
In stage one, the algorithm computes $x$
and is charged the assignment cost for $x$.
In stage two, each customer $i$ is {\em active} ($i\in\calA$)
with probability $p_i$, independently.
The algorithm is charged opening costs $f_j$
only for facilities with active customers 
(cost $\sum_j f_j \max_{i\in\cal A} x_{ij}$).
The (submodular) cost $c(x)$ is the total {\em expected} charge. 


\section{Sequential Primal-Dual Algorithm for Fractional Packing and MWM}
\label{sec:covering_packing}

This section gives a sequential $\ratio$-approximation algorithm for \packing
which is the basis of subsequent distributed algorithms for \packing
and \bmatching.
The algorithm is a primal-dual extension of the previous covering algorithms.

\paragraph{Notation.} 
Fix any \prob{Fractional Packing} instance and its \prob{Fractional Covering} dual:
\begin{center}
{\em maximize $w\cdot y$ subject to $y \in \Rp^m$ and $A\tran y \le c$},

{\em minimize $c\cdot x$ subject to $x \in \Rp^{n}$ and $Ax \ge w$.}
\end{center}
Let $C_i$ denote the $i$-th covering constraint ($A_i x \ge w_i$) 
and $P_j$ denote the $j$-th packing constraint ($A^T_j y \le c_j$).
Let $\Vars(S)$ contain the indices of variables in constraint $S$.
Let $\Cons(z)$ contain the indices of constraints in which variable $z$ appears.
Let $N(y_s)$ denote the set of packing variables that share constraints with $y_s$:
$N(y_s) = \vars(\Cons(y_s))$.

\begin{figure}[t]
\begin{alg1}
\Ahead{\bf Sequential algorithm for Fractional Covering}
\Alabel{alg:covering}
\A Initialize $x^0 \leftarrow \mathbf{0}$, $w^0 \leftarrow w$, $t\leftarrow 0$.
\A While $\exists$ an unsatisfied covering constraint $C_s$: 
\algbeg
\A Set $t \leftarrow t+1$.
\comment{do a step for $C_s$}
\label{step:time}
\A Let $\beta_s \leftarrow w^{t-1}_s \cdot \min_{j \in \vars(C_s)}c_{j}/A_{sj}$.
\A For each $j\in\vars(C_s)$:
\algbeg
\A Set $x^t_j \leftarrow x^{t-1}_j + \beta_s /c_j$.
\A For each $i \in \cons(x_j)$ set $w^t_i \leftarrow w^{t-1}_i -A_{ij}\beta_s /c_j$.
\algend
\algend
\A Let $x \leftarrow x^t$.  Return $x$.
\end{alg1}
\vspace{-2em}
\end{figure}

\paragraph{Fractional Covering.}
\ref{alg:covering} shows a sequential algorithm for \prob{Fractional Covering}.
Each iteration of the algorithm does one step for an unsatisfied constraint $C_s$:
taking the step size $\beta_s$ to be the minimum such that raising just {\em one} variable $x_j\in\vars(C_s)$
by $\beta_s/c_j$ is enough to make $x$ satisfy $C_s$,
then raising {\em each} variable $x_j$ for $j\in\vars(C_s)$ by $\beta_s/c_j$.

\prob{Fractional Covering} is \prob{CMIP} restricted to instances with no integer variables
and no upper bounds on variables.  
For this special case,
in the sequential algorithm for \prob{CMIP} in Section~\ref{sec:sequential_two},
each constraint has no proper relaxation.  
Hence, that sequential algorithm reduces to \ref{alg:covering}.
In turn, the sequential algorithm for \prob{CMIP} in Section~\ref{sec:sequential_two} 
is a special case of the
sequential algorithm for \prob{Submodular-cost Covering} in Section~\ref{sec:sequential_general}:
\begin{observation}\label{obs:specializes}
\ref{alg:covering} is a special case of both of the following two sequential algorithms:
\begin{itemize}
\item the algorithm for \prob{CMIP} in Section~\ref{sec:sequential_two}, and
\item the algorithm for \prob{Submodular-cost Covering} in Section~\ref{sec:sequential_general}.
\end{itemize}
\end{observation}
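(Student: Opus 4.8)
The plan is to prove the two claimed containments separately and then compose them, exactly along the lines sketched in the paragraph preceding the observation. For each containment I would do two things: verify that the restricted problem really is an instance of the more general framework (so that the general algorithm is applicable), and then check that, once specialized, the general $\stepsize$ and $\step$ reproduce line-for-line a single iteration of \ref{alg:covering}. Since ``is a special case of'' is transitive, establishing that \ref{alg:covering} coincides with the \prob{CMIP} algorithm on \prob{Fractional Covering} instances, and that the \prob{CMIP} algorithm coincides with the \prob{Submodular-cost Covering} algorithm on \prob{CMIP} instances, gives both assertions at once.

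First I would treat \prob{Fractional Covering} as the \prob{CMIP} instance with no integer variables ($I=\emptyset$) and every upper bound infinite ($u_j=\infty$). Then in each modeled constraint $S_i$ of Section~\ref{sec:sequential_two} every floor and every truncation disappears, leaving $S_i$ identical to the linear constraint $A_i x\ge w_i$ with no proper relaxation; hence the only relaxed constraint is $C_i$ itself and $\Phi(x,S_i)\in\{0,1\}$, equal to $1$ precisely when $A_i x< w_i$. Reducing $\Phi(x,C_s)$ therefore means satisfying $C_s$, and I would compute $\stepsize(x,C_s)$ directly: with residual requirement $w^{t-1}_s=w_s-A_s x^{t-1}$, raising a single variable $x_j$ by $\delta$ satisfies $C_s$ iff $A_{sj}\delta\ge w^{t-1}_s$, at cost $c_j w^{t-1}_s/A_{sj}$, so minimizing over $j\in\vars(C_s)$ yields $\stepsize(x,C_s)=w^{t-1}_s\min_{j} c_j/A_{sj}$, which is exactly the $\beta_s$ of \ref{alg:covering}. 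The \prob{CMIP} $\step$ then raises each $x_j$ with $A_{sj}>0$ by $\beta_s/c_j$, matching \ref{alg:covering}'s variable update; the weight bookkeeping $w^t_i\leftarrow w^{t-1}_i-A_{ij}\beta_s/c_j$ is merely the running value of $w_i-A_i x^t$.

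Next I would exhibit \prob{CMIP} as a \prob{Submodular-cost Covering} instance. The linear cost $c\cdot x$ is non-decreasing (as $c\ge 0$), continuous, and submodular (indeed modular), and each modeled \prob{CMIP} constraint is closed upwards because its left-hand side is non-decreasing in every variable, so all hypotheses of Section~\ref{sec:sequential_general} hold. I would run the generic algorithm using the same $\stepsize$ of Section~\ref{sec:sequential_two}; this is legitimate because that function satisfies $\stepsize(x,S)\le\distance_c(x,S)$ (Lemma~\ref{lemma:satisfies}), which is the only requirement imposed by Observation~\ref{obs:general_stepsize}, and the general algorithm is correct for any $\stepsize$ meeting that requirement. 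Finally, for linear cost the generic rule ``take $x'_j$ maximal so that raising $x_j$ to $x'_j$ increases $c(x)$ by at most $\beta$'' reduces to solving $c_j(x'_j-x_j)=\beta$, i.e.\ $x'_j=x_j+\beta/c_j$, which is precisely the \prob{CMIP} update over $j\in\vars(S)$.

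The hard part is not the arithmetic of the updates, which is immediate once the cost is linear, but reconciling the three different conventions for $\stepsize$: the generic definition in Section~\ref{sec:sequential_general} (the least $\beta$ that satisfies $S$ in one step), the $\Phi$-based definition in Section~\ref{sec:sequential_two} (the cheapest single-variable raise that lowers $\Phi$), and the explicit $\beta_s$ of \ref{alg:covering}. I would resolve this by the two facts isolated above: in the no-relaxation regime lowering $\Phi$ is the \emph{same event} as satisfying the constraint, so the first two definitions coincide and both equal $\beta_s$; and the general algorithm's correctness for an arbitrary valid $\stepsize$ lets me carry this common choice through the second containment. Once this alignment is made explicit, transitivity closes the argument.
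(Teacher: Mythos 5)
Your proposal is correct and follows essentially the same route as the paper, which justifies the observation in the short paragraph preceding it: \prob{Fractional Covering} is \prob{CMIP} with no integer variables and no upper bounds, so each constraint has no proper relaxation and the \prob{CMIP} algorithm collapses to \ref{alg:covering}, while the \prob{CMIP} algorithm is in turn an instance of the \prob{Submodular-cost Covering} algorithm (your version just makes the $\stepsize$ computations explicit). One harmless inaccuracy: the ``generic'' $\stepsize$ of Section~\ref{sec:sequential_general} (least $\beta$ so that raising \emph{all} variables satisfies $S$) does \emph{not} coincide with the $\Phi$-based single-variable one even in the no-relaxation regime, but this does not matter since, as you note, the general algorithm accepts any $\stepsize$ satisfying the condition of Observation~\ref{obs:general_stepsize}, and that is the justification the argument actually rests on.
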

The explicit presentation of \ref{alg:covering} eases the analysis. 

\paragraph{Fractional packing.}
Fix a particular execution of \ref{alg:covering} on the \prob{Fractional Covering} instance.
In addition to computing a \prob{Fractional Covering} solution $x$,
the algorithm also (implicitly) defines a \packing solution $y$, as follows:\footnote
{This tail-recursive definition follows local-ratio analyses \cite{Bar-Yehuda05On-the-equivalence}.
The more standard primal-dual approach 
--- setting the packing variable for a covering constraint
when a step for that constraint is done --- doesn't work.
See the appendix for an example.}

Let $T$ be the number of steps performed by \ref{alg:covering}.
For $0\le t \le T$,
after the $t$th step of \ref{alg:covering},
let $x^t$ be the covering solution so far,
and let $w^t = w - A x^t$ be the current slack vector.
Define the {\it residual covering problem} to be 
\begin{center}\em
minimize $c\cdot x$ subject to $x \in \Rp^{n}$ and $A x \ge w^t$;
\end{center}
define the {\it residual packing problem} to be its dual:
\begin{center}\em
maximize $w^t\cdot y$ subject to  $y \in \Rp^m$ and $A\tran y \le c$.
\end{center}
The residual packing constraints are independent of $t$.
Now define a sequence $y^T,y^{T-1}, \ldots, y^1, y^0$, where $y^t$ is
a solution to the residual packing problem after step $t$, inductively, as follows:
Take $y^T = \mathbf 0$.
Then, for each $t=T-1,T-2,\ldots,0$, define $y^t$ from $y^{t+1}$ as follows:
let $C_s$ be the constraint for which \ref{alg:covering} did a step in iteration $t$;
start with $y^t = y^{t+1}$, then raise the single packing variable $y^t_s$
maximally, subject to the packing constraints $A^Ty \le c$.
Finally, define $y$ to be $y^0$.

The next lemma and weak duality prove that this $y$ 
is a $\ratio$-approximation for \prob{Fractional Packing}.
\begin{lemma}\label{thm:2approximation_ratio}
The cost of the covering solution $x$ returned by \ref{alg:covering} 
is at most $\ratio$ times the cost of the packing solution $y$ defined above:
$w\cdot y \ge \frac{1}{\ratio} c\cdot x$.
\end{lemma}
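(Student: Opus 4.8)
The plan is to prove the statement by a backward induction over the $T$ steps of \ref{alg:covering}, establishing the stronger invariant
$$ w^t \cdot y^t \;\ge\; \tfrac{1}{\ratio}\,\big(c\cdot x^T - c\cdot x^t\big) \qquad (0\le t\le T), $$
where $x^t$ and $w^t = w - A x^t$ are the covering iterate and residual slack after step $t$, and $y^t$ is the packing vector from the tail-recursion. Setting $t=0$ recovers the lemma, since $x^0=\mathbf 0$ gives $w^0=w$, $y^0=y$, and $x^T=x$, so the invariant reads $w\cdot y \ge \tfrac1\ratio\, c\cdot x$. The base case $t=T$ is immediate because $y^T=\mathbf 0$ and $c\cdot x^T - c\cdot x^T = 0$.

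For the inductive step I would combine two ingredients. First, a cost bound: step $t$, done for a constraint $C_s$ with step size $\beta_s$, raises each of the at most $\ratio$ variables $x_j$ ($j\in\vars(C_s)$) by $\beta_s/c_j$, so $c\cdot x^t - c\cdot x^{t-1} = |\vars(C_s)|\,\beta_s \le \ratio\,\beta_s$. Second, and this is the crux, a packing-gain bound, $w^{t-1}\cdot y^{t-1} - w^t\cdot y^t \ge \beta_s$, where $y^{t-1}$ is obtained from $y^t$ by raising $y_s$ maximally. Granting both, the inductive hypothesis gives $w^{t-1}\cdot y^{t-1} \ge w^t\cdot y^t + \beta_s \ge \tfrac1\ratio(c\cdot x^T - c\cdot x^t) + \tfrac1\ratio(c\cdot x^t - c\cdot x^{t-1}) = \tfrac1\ratio(c\cdot x^T - c\cdot x^{t-1})$, closing the induction.

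The hard part is the packing-gain bound, because passing from $y^t$ to $y^{t-1}$ changes the weight vector ($w^{t-1}\ge w^t$ componentwise, as $w^{t-1}-w^t = A(x^t-x^{t-1})\ge 0$) and the packing vector (only $y_s$ is raised, by some $\Delta\ge 0$) at the same time. I would first observe that, since residual slacks only decrease, each constraint is processed at most once, so $y^t_s = 0$ when $y_s$ is raised; this kills the $i=s$ term of $\sum_i(w^{t-1}_i-w^t_i)y^t_i$ and yields the clean decomposition
$$ w^{t-1}\cdot y^{t-1} - w^t\cdot y^t = \sum_{j\in\vars(C_s)} \tfrac{\beta_s}{c_j}\,(A\tran y^t)_j \;+\; w^{t-1}_s\,\Delta. $$
Let $b\in\vars(C_s)$ index the packing constraint made tight when $y_s$ is raised maximally, so $\Delta = (c_b - (A\tran y^t)_b)/A_{sb}$, and let $j^\ast$ be the minimizer defining $\beta_s = w^{t-1}_s\,c_{j^\ast}/A_{sj^\ast}$. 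Dropping all but the $j=b$ term of the sum (each term is nonnegative) and substituting for $(A\tran y^t)_b$, I expect the right-hand side to reduce to $\beta_s + \Delta\big(w^{t-1}_s - \beta_s A_{sb}/c_b\big)$, whose correction term is nonnegative precisely because $j^\ast$ minimizes $c_j/A_{sj}$, giving $\beta_s A_{sb}/c_b = w^{t-1}_s\,(c_{j^\ast}/A_{sj^\ast})/(c_b/A_{sb}) \le w^{t-1}_s$. This gives $w^{t-1}\cdot y^{t-1}-w^t\cdot y^t \ge \beta_s$, as needed; feasibility of $y=y^0$, which turns the lemma into a $\ratio$-approximation guarantee via weak duality, follows because each $y_s$ is raised only up to tightness of $A\tran y \le c$.
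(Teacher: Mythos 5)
Your proof is correct and follows essentially the same route as the paper's: the backward induction with invariant $w^t\cdot y^t \ge \tfrac1\ratio(c\cdot x^T - c\cdot x^t)$ is just the paper's telescoping of $w\cdot y = \sum_t (w^{t-1}y^{t-1}-w^ty^t)$ recast, and your per-step bound $w^{t-1}\cdot y^{t-1}-w^t\cdot y^t\ge\beta_s$ uses the same two facts (tightness of the binding packing constraint $P_b$ after raising $y_s$, and $\beta_s A_{sb}/c_b\le w^{t-1}_s$ from the $\min$ defining $\beta_s$), merely evaluating the cross-terms at $y^t$ plus a correction rather than directly at $y^{t-1}$.
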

\begin{proof}
When \ref{alg:covering} does a step to satisfy the covering constraint $C_s$
(increasing $x_j$ by $\beta_s /c_j$ for the at most $\ratio$ variables in $\Vars(C_s)$), 
the covering cost $c\cdot x$ increases by at most $\ratio\beta_s$,
so at termination $c\cdot x$ is at most $\sum_{s\in\calD} \ratio \beta_s$.
Since $w\cdot y = w^0 y^0 - w^T y^T = \sum_{t=1}^T (w^{t-1}y^{t-1} - w^ty^t)$,
it is enough to show that $w^{t-1}y^{t-1} - w^ty^t \ge \beta_s$, where $C_s$ is
the covering constraint used in the $t$th step of \ref{alg:covering}.
Show this as follows:

\begin{align}
&~~w^{t-1}y^{t-1} - w^t y^t ~=~ \sum_{i} (w^{t-1}_i y^{t-1}_i - w^t_i y^t_i)
\nonumber\\
&= w^{t-1}_s~ y^{t-1}_s ~+  ~\sum_{i\neq s} (w^{t-1}_i - w^t_i) y^{t-1}_i
\label{eq:second_sequential}
\\
&= y^{t-1}_s\beta_s\max_{j \in \Cons(y_s)} \frac{A_{sj}}{c_j}
~+\sum_{i\ne s; j \in \Cons(y_s)} A_{ij}\frac{\beta_s}{c_j} y^{t-1}_i
\label{eq:fourth_sequential}
\\
&\ge \beta_s ~ \frac{1}{c_j} \sum_{i=1}^m A_{ij}y^{t-1}_i \nonumber\\
&~~\text{(for $j$ s.t. constraint $P_j$ is tight after raising $y_s$)} 
\label{eq:fifth_sequential}
\\
&=\beta_s \nonumber
\end{align}
Eq.\,(\ref{eq:second_sequential}) follows from $y^t_s =0$ and 
$y^{t-1}_i = y^t_i$ $(\forall i \neq s)$.
\\For~(\ref{eq:fourth_sequential}), the definition of $\beta_s$
gives $\displaystyle w^{t-1}_s = \beta_s\max_{j \in \Cons(y_s)} \textstyle \frac{A_{sj}}{c_j}$,
and, by inspecting \ref{alg:covering},
$w^{t-1}_i - w^t_i$
is 
\\$\sum_{j\in\vars(C_s) : i\in \cons(x_j)} A_{ij}\frac{\beta_s}{c_j}
=\sum_{j\in\cons(y_s)} A_{ij}\frac{\beta_s}{c_j}$.
\\Then~(\ref{eq:fifth_sequential}) follows by dropping all but the terms 
for the index $j\in\cons(y_s)$ s.t.~constraint $P_j$ gets tight ($A\tran_j y^{t-1} = c_j$).
The last equality holds because $P_j$ is tight.
\qed
\end{proof}

By the next lemma, the packing solution $y$ has integer entries
as long as the coefficients $A_{ij}$ are 0/1 and the $c_j$'s  are integers:
\begin{lemma}\label{thm:integral_maching}
If $A \in \{0,1\}^{m \times n}$ and $c\in\Zp^n$ then the packing solution $y$ lies in $\Zp^m$.
\end{lemma}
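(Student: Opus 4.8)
The plan is to establish integrality by a backward induction along the sequence $y^T, y^{T-1}, \ldots, y^0$ that defines $y = y^0$, proving that every $y^t$ lies in $\Zp^m$ (and stays packing-feasible). The base case is immediate: $y^T = \mathbf 0 \in \Zp^m$ satisfies $A\tran y^T \le c$. The whole argument then reduces to the single inductive step of passing from $y^{t+1}$ to $y^t$.

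First I would record a structural fact about \ref{alg:covering}: each covering constraint is stepped at most once. The step size $\beta_s = w^{t-1}_s \min_{j\in\vars(C_s)} c_j/A_{sj}$ is chosen so that raising each $x_j$ for $j\in\vars(C_s)$ by $\beta_s/c_j$ increases $A_s x$ by $\sum_{j\in\vars(C_s)} A_{sj}\beta_s/c_j \ge w^{t-1}_s$, which already satisfies $C_s$; since the variables only increase afterwards, $C_s$ stays satisfied and is never revisited. Consequently, in the backward construction the packing variable $y_s$ of the constraint $C_s$ stepped in the corresponding iteration is untouched until that step, so it is raised starting from the value $0$, which gives a clean closed form for its final value.

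For the inductive step, assume $y^{t+1}\in\Zp^m$ with $A\tran y^{t+1}\le c$. By construction $y^t$ agrees with $y^{t+1}$ except in the single coordinate $s$, where $y^t_s$ is raised maximally subject to $A\tran y \le c$. The packing constraints involving $y_s$ are exactly the $P_j$ with $A_{sj}\ne 0$, i.e.\ $j\in\cons(y_s)$, and since $A\in\{0,1\}$ each such $A_{sj}=1$. Hence raising $y_s$ to the boundary of $\{A\tran y\le c\}$ (from $y^{t+1}_s=0$) gives
\[
y^t_s \;=\; \min_{j\in\cons(y_s)} \Big(c_j - \sum_{i\ne s} A_{ij}\, y^{t+1}_i\Big).
\]
Each $c_j$ is an integer, each $y^{t+1}_i$ is an integer by the inductive hypothesis, and each $A_{ij}\in\{0,1\}$, so the right-hand side is a minimum of integers and is therefore an integer; feasibility of $y^{t+1}$ (together with $y^{t+1}_s=0$) makes each term nonnegative, so $y^t_s\in\Zp$. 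Thus $y^t\in\Zp^m$, and it remains packing-feasible since we raised to, not past, the boundary. Taking $t=0$ yields $y=y^0\in\Zp^m$.

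The crux of the argument, and the only place where the hypotheses are actually used, is why raising $y_s$ to the polytope boundary introduces no fractions. In general, the boundary value is $(c_j - \sum_{i\ne s}A_{ij}y_i)/A_{sj}$ for the binding $j$, and fractionality could enter only through dividing the integer slack $c_j - \sum_{i\ne s}A_{ij}y_i$ by the coefficient $A_{sj}$. The assumption $A\in\{0,1\}$ forces $A_{sj}=1$ whenever it is nonzero, eliminating the division, while $c\in\Zp^n$ and the inductive integrality of the $y_i$ keep the slack integral. I expect no serious obstacle beyond stating the maximal-raise formula precisely and confirming that $y^{t+1}_s=0$ so that this value is the final $y_s$; with a general matrix $A$ the same construction would typically produce fractional $y_s$, so the $0/1$ structure is exactly what the lemma exploits.
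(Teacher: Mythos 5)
Your proposal is correct and follows essentially the same route as the paper's proof: reverse induction on $t$ from $y^T=\mathbf 0$, observing that raising a single coordinate maximally subject to $A\tran y\le c$ with $A\in\{0,1\}^{m\times n}$, integer $c$, and (inductively) integer remaining coordinates yields an integer value. The extra observations you include (each constraint is stepped at most once, hence $y^{t+1}_s=0$) are correct but not needed for integrality, since the maximal-raise value is an integer regardless of the starting value of $y_s$.
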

\begin{proof}
Since all non-zero coefficients are 1, the packing constraints are of the form 
$ \sum_{i \in \vars(P_j)} y_i \le c_j ~(\forall i)$.
By (reverse) induction on $t$, each $y^t$ is in $\Zp^m$.
This is true initially, because $y^T=\mathbf 0$.
Assume it is true for $y^{t-1}$.
Then it is true for $y^t$ because, to obtain $y^t$ from $y^{t-1}$, 
one entry $y^t_i$ is raised maximally subject to $A\tran y \le c$.
\qed
\end{proof}

\begin{corollary}
For both \packing and \bmatching, the packing solution $y$
defined above is a $\ratio$-approximation.
\end{corollary}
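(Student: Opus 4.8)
The plan is to combine Lemma~\ref{thm:2approximation_ratio} and Lemma~\ref{thm:integral_maching} with weak LP duality; no substantial new work is required, so the corollary is essentially a bookkeeping argument. First I would confirm that the packing solution $y=y^0$ defined above is feasible for \packing. By construction, each $y^t$ is obtained from $y^{t+1}$ by raising a single coordinate $y^t_s$ maximally subject to the packing constraints $A\tran y\le c$; hence $A\tran y^t\le c$ holds for every $t$, and in particular $A\tran y\le c$. Since also $y\ge\mathbf 0$, the vector $y$ is a feasible packing solution.

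For \packing, I would then invoke weak duality between the packing LP (maximize $w\cdot y$ subject to $y\in\Rp^m$ and $A\tran y\le c$) and its covering dual (minimize $c\cdot x$ subject to $x\in\Rp^n$ and $Ax\ge w$). The covering solution $x$ returned by \ref{alg:covering} is feasible, so weak duality gives $\opt\le c\cdot x$, where $\opt$ denotes the optimal packing value. Combining this with Lemma~\ref{thm:2approximation_ratio}, which states $w\cdot y\ge\frac1\ratio\,c\cdot x$, yields
\[
w\cdot y \;\ge\; \tfrac1\ratio\,c\cdot x \;\ge\; \tfrac1\ratio\,\opt ,
\]
so $y$ is a $\ratio$-approximation for \packing.

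For \bmatching I would additionally appeal to Lemma~\ref{thm:integral_maching}. Since each $A_{ij}\in\{0,1\}$ and the vertex capacities $c_j$ are integers, that lemma guarantees $y\in\Zp^m$, so $y$ is a feasible \emph{integral} matching. To transfer the guarantee to the integral problem I would note that every integral matching is also a feasible fractional packing solution, so the optimal integral matching value is at most the optimal value of the fractional packing relaxation, hence at most $c\cdot x$ by weak duality as above. The same chain of inequalities then shows that $w\cdot y$ is at least $\frac1\ratio$ times the optimal matching value.

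I do not expect any genuine obstacle here, as the result follows directly from the two preceding lemmas. The only points that require care are (i) verifying feasibility of $y$, which is handled by the maximal-raising step in its construction, and (ii) in the matching case, making explicit that the integral optimum is dominated by the fractional packing optimum, so that the approximation ratio established against the LP value carries over to the integral problem.
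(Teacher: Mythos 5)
Your proposal is correct and matches the paper's (implicit) argument exactly: the paper states just before Lemma~\ref{thm:2approximation_ratio} that "the next lemma and weak duality prove that this $y$ is a $\ratio$-approximation," and the matching case is handled by Lemma~\ref{thm:integral_maching} precisely as you describe. Your added care about feasibility of $y$ and about the integral optimum being dominated by the fractional one is sound bookkeeping that the paper leaves unstated.
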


The covering solution $x$ computed by \ref{alg:covering}
is somewhat independent of the order in which \ref{alg:covering}
considers constraints.
For example, if \ref{alg:covering} does a step for a constraint $C_s$, 
and then (immediately) a step for a different constraint $C_{s'}$ that shares
no variables with $C_s$, then doing those steps in the opposite order
would not change the final covering solution $x$.
Not surprisingly, it also would not change the packing solution $y$ as defined above.
This flexibility is important for the distributed algorithm.
The partial order defined next captures this flexibility precisely:
\begin{definition}\label{def:relation}\em
Let $t_i$ denote the time at which \ref{alg:covering} does a step to cover $C_i$.\footnote
{For now, the time at which a step was performed can be thought as the step number $t$ (line~\ref{step:time} at \ref{alg:covering}). It will be slightly different in the distributed setting.}
Let $t_i = 0$ if no step was performed for $C_i$.
Let $i' \prec i$ denote the predicate
\begin{center}
$\vars(C_{i'}) \cap \vars(C_i) \neq \emptyset$ and $0 < t_{i'} < t_i$.
\end{center}
(That is, the two constraints share a variable
and \ref{alg:covering} did steps for both constraints
--- the first constraint sometime before the second constraint.)

Let $\calD$ be the poset of indices of covering constraints
for which \ref{alg:covering} performed a step,
(partially) ordered by the transitive closure of ``$\prec$". 
\end{definition}

\begin{figure}[t]
\begin{alg1}
\Ahead{\bf Sequential algorithm for Fractional Packing}
\Alabel{alg:packing}
\A Run \ref{alg:covering}; record poset $\calD$ of covering constraint indices.
\A Let $\Pi$ be a total order of $\calD$ respecting the partial order.
\label{step:reverse_order}
\A Initialize $y^T \leftarrow \mathbf{0}$ (where \ref{alg:covering} does $T$ steps).
\A For $t=T-1,T-2,\ldots,0$ do:
\algbeg
\A Set $y^{t} \leftarrow y^{t+1}$.
\A For $s = \Pi_t$, raise $y^{t}_s$ maximally subject to $A\tran y^{t}\le c$:
\smallskip
\begin{center}
$y^{t}_s \leftarrow \min_{j\in\Cons(y_s)}(c_j-A\tran _j y^t)/A_{sj}$ .\label{line:packing:raise}
\end{center}
\algend
\A Set $y \leftarrow y^0$.  Return $y$.
\end{alg1}
\vspace{-2em}
\end{figure}

Sequential \ref{alg:packing} computes the \packing solution $y$ as defined previously,
but considering the variables in an order obtained by reversing {\em any}  total ordering $\Pi$ of $\calD$.\footnote
{The partial order $\calD$ is still defined with respect 
to a particular fixed execution of \ref{alg:covering}.
The goal is to allow \ref{alg:covering} to do steps for the constraints in any order,
thus defining the partial order $\calD$,
and then to allow \ref{alg:packing} to use any total ordering $\Pi$ of $\calD$.}
The next lemma shows that this still leads to the same packing solution $y$.

\begin{lemma}\label{thm:anyorder}
\ref{alg:packing} returns the same solution $y$ 
regardless of the total order $\Pi$ of $\calD$ that it uses
in line~\ref{step:reverse_order}.
\end{lemma}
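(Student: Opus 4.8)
The plan is to reduce the claim to a local commutativity property and then invoke the standard fact that any two linear extensions of a finite poset are connected by a sequence of adjacent transpositions of incomparable elements. Since \ref{alg:packing} processes the indices of $\calD$ in the reverse of the chosen total order $\Pi$, two total orders $\Pi$ and $\Pi'$ that differ by swapping a single adjacent incomparable pair yield reversed processing sequences that likewise differ only by one adjacent transposition. Thus it suffices to show that swapping two adjacent \emph{incomparable} indices in the processing order leaves the returned $y$ unchanged.

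First I would characterize incomparability in $\calD$: if $i,i'\in\calD$ are incomparable, then $\vars(C_i)\cap\vars(C_{i'})=\emptyset$. Indeed, both indices lie in $\calD$, so \ref{alg:covering} performed a step for each, giving distinct positive step-times $t_i\neq t_{i'}$; if the two constraints shared a variable, then by Definition~\ref{def:relation} one of $i'\prec i$ or $i\prec i'$ would hold directly, contradicting incomparability. Hence incomparable indices correspond to covering constraints with disjoint variable sets, equivalently $\Cons(y_i)\cap\Cons(y_{i'})=\emptyset$ (using $\Cons(y_s)=\vars(C_s)$). So the packing variables $y_i$ and $y_{i'}$ never appear together in any packing constraint $P_j$.

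Next I would observe that the maximal raise of a packing variable $y_s$ in \ref{alg:packing},
\[
y^{t}_s \leftarrow \min_{j\in\Cons(y_s)}(c_j-A\tran_j y^{t})/A_{sj},
\]
depends only on the residual slacks $c_j-A\tran_j y^{t}$ of the packing constraints $P_j$ with $j\in\Cons(y_s)$, and raising $y_s$ modifies only those same slacks. Consequently, when two incomparable indices $i,i'$ are processed consecutively, the update of $y_i$ reads and writes slacks for $\Cons(y_i)$ while the update of $y_{i'}$ reads and writes slacks for the disjoint set $\Cons(y_{i'})$. The two raise operations therefore commute: each acts on a private set of packing constraints, so performing them in either order produces the same values of $y_i$ and $y_{i'}$ and leaves every other slack identical. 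Since the state after the pair is processed is independent of their relative order, every subsequent raise---and hence the final $y=y^0$---is unaffected. Applying this to each adjacent transposition along a path between any two linear extensions of $\calD$ proves the lemma.

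The main obstacle is the first step: establishing that incomparability in the poset $\calD$ forces disjoint variable sets. This is where the precise definition of $\prec$ (comparing any two stepped constraints that share a variable, via their distinct step-times) is essential; once it is in hand, the commutativity of disjoint-support raises and the linear-extension connectivity fact are routine.
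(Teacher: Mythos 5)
Your proof is correct and follows essentially the same route as the paper's: both reduce the claim to swapping adjacent incomparable elements of a linear extension, observe that incomparable indices in $\calD$ must have disjoint variable sets (equivalently disjoint $\Cons(y_i)$, $\Cons(y_{i'})$), and conclude that the corresponding maximal raises commute because each reads and writes a private set of packing-constraint slacks. The only cosmetic difference is that you cite the connectivity of linear extensions under adjacent transpositions as a standard fact, whereas the paper derives it explicitly via a Bubblesort argument.
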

\begin{proof}
Observe first that, if indices ${s'}$ and $s$ are independent in the poset $\calD$,
then in line~\ref{line:packing:raise} of \ref{alg:packing}, 
the value computed for $y_s$ does not depend on $y_{s'}$ (and vice versa).
This is because, if it did, there would be a $j$ with $A_{sj} \ne 0$ and $A_{s'j} \ne 0$,
so the covering variable $x_j$ would occur in both covering constraints $C_s$ and $C_{s'}$.
In this case, since $C_s$ and $C_{s'}$ share a variable,
$s$ and ${s'}$ would be ordered in $\calD$.

Consider running \ref{alg:packing} using any total order $\Pi$ of $\calD$.
Suppose that, in some iteration $t$, the algorithm raises a variable $y_s$ ($s=\Pi_t$),
and then, in the next iteration, raises a variable $y_{s'}$ ($s'=\Pi_{t-1}$) 
where $y_{s'}$ is independent of $y_s$ in the poset $\calD$.
By the previous observation, the value computed for $y_s$ does not depend on $y_{s'}$, and vice versa.
Thus, {\em transposing} the order in which \ref{alg:packing} considers just these two variables
(so that \ref{alg:packing} raises $y_{s'}$ in iteration $t$ and $y_s$ in the next iteration,
instead of the other way around) would not change the returned vector $y$.
The corresponding total order $\Pi'$ ($\Pi$, but with $s$ and $s'$ transposed)
is also a total ordering of $\calD$.
Thus, for any order $\Pi'$ that can be obtained from $\Pi$
by such a transposition, \ref{alg:packing} gives the same solution $y$.

To complete the proof, observe that {\em any} total ordering $\Pi'$ of $\calD$
can be obtained from $\Pi$  by finitely many such transpositions.
Consider running Bubblesort on input $\Pi$,
ordering elements (for the sort) according to $\Pi'$.
This produces $\Pi'$ in at most $T\choose 2$ transpositions
of adjacent elements.
Bubblesort only transposes two adjacent elements $s',s$,
if $s'$ occurs before $s$ in $\Pi$
but after $s$ in $\Pi'$,
so $s$ and $s'$ must be independent in $\calD$.
Thus, each intermediate order produced along the way
during the sort is a valid total ordering of $\calD$.
\qed
\end{proof}
 
\begin{corollary}
For both \packing and \bmatching, \ref{alg:packing}
is $\ratio$-approximation algorithm.
\end{corollary}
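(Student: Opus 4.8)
The plan is to derive the statement directly from two facts already in hand: first, that the canonical packing solution $y$ built by the tail\hyp{recursive} residual construction (defined just before Lemma~\ref{thm:anyorder}) is a $\ratio$\hyp{approximation} for both \packing and \bmatching; and second, that \ref{alg:packing} computes exactly that same $y$. The first fact is the corollary stated immediately after Lemma~\ref{thm:integral_maching}, which combines Lemma~\ref{thm:2approximation_ratio} (giving $w\cdot y \ge \frac{1}{\ratio} c\cdot x$) with weak LP duality, together with Lemma~\ref{thm:integral_maching} supplying integrality of $y$ in the \bmatching case. So the entire burden is to identify the output of \ref{alg:packing} with this canonical $y$.

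First I would verify that ordering the indices of $\calD$ by the step time $t_i$ at which \ref{alg:covering} processed each constraint is a legal total order $\Pi$ for line~\ref{step:reverse_order}. By Definition~\ref{def:relation}, $i' \prec i$ forces $0 < t_{i'} < t_i$, so the time order refines ``$\prec$'' and hence respects the partial order on $\calD$ (the $t_i$ are distinct because \ref{alg:covering} does at most one step per constraint). With this particular $\Pi$, the reversed pass of \ref{alg:packing} --- iterating $t = T-1, T-2, \ldots, 0$, copying $y^{t+1}$ into $y^t$, and raising the single variable $y^t_s$ (for $s=\Pi_t$) maximally subject to $A\tran y \le c$ --- processes the constraints in reverse time order, which is precisely the order used in the inductive definition of the canonical $y^T, y^{T-1}, \ldots, y^0$. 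Thus \ref{alg:packing}, run with the time order, returns the canonical $y$.

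Then I would invoke Lemma~\ref{thm:anyorder}: the solution returned by \ref{alg:packing} does not depend on which total order $\Pi$ of $\calD$ is used. Since the time order is one valid choice and already yields the canonical $y$, \emph{every} valid choice of $\Pi$ yields the canonical $y$. Combining this with the first fact gives the conclusion: \ref{alg:packing} always outputs a $\ratio$\hyp{approximation}, for \packing (where $y$ is feasible and $w\cdot y \ge \frac{1}{\ratio}\opt$ by duality, $\opt$ being the fractional optimum) and for \bmatching (where in addition $y$ is integral by Lemma~\ref{thm:integral_maching} and the integer optimum is bounded above by the fractional one).

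The hard part is not any new estimate but the identification carried out in the second paragraph, and most of that work is already done: Lemma~\ref{thm:anyorder} establishes order\hyp{independence} via the transposition/Bubblesort argument, so the only remaining obstacle is the bookkeeping of confirming that the time order is an admissible $\Pi$ and that reversing it reproduces the residual construction verbatim. I expect no genuinely new difficulty to arise, making this corollary essentially a packaging of Lemmas~\ref{thm:2approximation_ratio}, \ref{thm:integral_maching}, and~\ref{thm:anyorder}.
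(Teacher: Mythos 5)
Your proposal is correct and matches the paper's intended argument exactly: the paper states this corollary without proof as the immediate combination of the first corollary (Lemma~\ref{thm:2approximation_ratio} plus weak duality, with Lemma~\ref{thm:integral_maching} for integrality) and Lemma~\ref{thm:anyorder}, using the time order as the witnessing total order of $\calD$. Your extra bookkeeping --- checking that each constraint receives at most one step so the $t_i$ are distinct and the time order is a linear extension of ``$\prec$'' --- is exactly the right verification and introduces nothing beyond what the paper intends.
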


\begin{figure*}[t]
\begin{alg}
\Ahead{\bf Distributed 2-approximation algorithm for Fractional Packing with $\ratio=2$}
\Alabel{alg:distributed_packing_2}

\Ain Graph $G=(V,E)$ representing a fractional packing problem
instance with $\ratio =2$ .
\Aout Feasible $y$, ~2-approximately minimizing $w \cdot y $.

\smallskip
\A Each edge $e_i \in E$ initializes $y_i \leftarrow 0$.
\A Each edge $e_i \in E$ initializes $done_i \leftarrow \text{false}$.
\comment{this indicates if $y_i$ has been set to its final value}

\smallskip
\A Until each edge $e_i$ has set its variable $y_i$ ($done_i = \text{true}$), perform a round:
\algbeg
\A Perform a round of \ref{alg:distributed_two}.
\comment{covering with $\ratio=2$ augmented to compute $(t^R_i$,$t^S_i)$}

\Along{For each node $u_r$ that was a root (in \ref{alg:distributed_two})
at any previous round, consider locally at $u_r$
all stars $\calS^t_r$ that were rooted by $u_r$ at any previous round $t$.
For each star $\calS^t_r$ perform $\packstar(\calS^t_r)$.}
\label{step:reconstruct_stars}
\medskip
\algend

\algsep

\Ahead{$\packstar(\text{star}~\calS^t_r)$:}
\A For each edge $e_i\in\calS^t_r$ in decreasing order of $t^S_i$:
\algbeg
\A If $\pack(e_i) = \text{NOT DONE}$ ~then~BREAK (stop the for loop).
\algend

\algsep

\Ahead{$\pack(\text{edge}~e_i =(u_j,u_r))$:}
\A If $e_i$ or any of its adjacent edges has a non-yet-satisfied covering constraint return NOT DONE.
\smallskip
\A If $t^R_i = 0$ then:
\algbeg
\A Set $y_i \leftarrow 0$ and  $done_i \leftarrow \text{true}$.
\A Return DONE.
\algend
\smallskip
\A If $done_{i'} = \text{false}$ for any edge $e_{i'}$ such that
$i \prec {i'}$ then return NOT DONE.
\A Set $y_i \leftarrow \min\big\{(c_j - \sum_{i'} A_{i'j}y_{i'})/A_{ij},~(c_r - \sum_{i'} A_{i'r}y_{i'})/A_{ir} \big\}$~and~
 $done_i \leftarrow \text{true}$.
\A Return DONE.
\end{alg}
\vspace{-2em}
\end{figure*}

\section{Fractional Packing with $\ratio=2$}
\label{sec:distributed_pack_two}
This section gives a 2\hyp{approximation}  for \packing with $\ratio=2$
(each variable occurs in at most two constraints).

\paragraph{Distributed model.}
\label{sec:model_graphs}
We assume the network in which the distributed computation takes place
has vertices for covering variables (packing constraints) and edges for covering constraints (packing variables).
So, the network has a node $u_j$ for every covering variable $x_j$.
An edge $e_i$ connects vertices $u_j$ and $u_{j'}$ if $x_j$ and $x_{j'}$ belong to the same covering constraint $C_i$, that is, there exists a constraint $A_{ij}x_j + A_{ij'}x_{j'} \ge w_i$ ($\ratio =2$ so there can be at most 2 variables in each covering constraint).

\paragraph{Algorithm.}
Recall \ref{alg:distributed_two},
which $2$-approximates \prob{CMIP$_2$} in $O(\log m)$ rounds.
\prob{Fractional Covering} with $\ratio=2$
is the special case of \prob{CMIP$_2$} with no integer variables and no upper bounds.
Thus, \ref{alg:distributed_two}
also $2$-approximates \prob{Fractional Covering} with $\ratio=2$
in $O(\log m)$ rounds.

Using the results in the previous section,
this section extends \ref{alg:distributed_two}
(as it specializes for \prob{Fractional Covering} with $\ratio=2$)
to compute a solution for the dual problem, \prob{Fractional Packing} with $\ratio=2$,
in $O(\log m)$ rounds.

\ref{alg:distributed_two} proceeds in rounds, and within each round it covers a number of edges. 
Define the time at which a step to cover constraint $C_i$ (edge $e_i$) is 
done as a pair $(t^R_i, t^S_i)$, where $t^R_i$ denotes the round in which the step 
was performed and $t^S_i$ denotes that within the star this step is the $t^S_i$-th one. 
Let $t^R_i = 0$ if no step was performed for $C_i$.
Overloading Definition~\ref{def:relation}, define ``$\prec$" as follows.

\begin{definition}\label{def:relation_graphs}\em
Let ${i'} \prec i$ denote that $\vars(C_{i'}) \cap \vars(C_i) \neq \emptyset$~ 
($i'$ and $i$ are adjacent edges in the network) and
the pair $(t^R_{i'}, t^S_{i'})$ is lexicographically less than 
$(t^R_{i}, t^S_{i})$ (but $t^R_{i'}>0$).
\end{definition}

For any two {\em adjacent} edges $i$ and $i'$,
the pairs $(t^R_i, t^S_i)$ and $(t^R_{i'}, t^S_{i'})$ are adequate to distinguish which edge
had a step to satisfy its covering constraint performed first.  
Adjacent edges can have their covering constraints done in the same round only if 
they belong to the same star (they have a common root), thus they differ in $t^S_i$. 
Otherwise they are done in different rounds, so they differ in $t^R_i$.
Thus the pair $(t^R_i, t^S_i)$ and relation ``$\prec$'' define a partially ordered 
set $\calD$ of all edges for which \ref{alg:distributed_two} did a step.

The extended algorithm, \ref{alg:distributed_packing_2},
runs \ref{alg:distributed_two} to compute a cover $x$, 
recording the poset $\calD$.
Meanwhile, it computes a packing $y$ as follows:
as it discovers $\calD$, it starts raising packing variable as soon as it can
(while emulating \ref{alg:packing}).
Specifically it sets a given $y_i\in\calD$
as soon as
(a) \ref{alg:distributed_two} has done a step for the covering constraint $C_i$,
(b) the current cover $x$ satisfies each adjacent covering constraint $C_{i'}$, and 
(c) for each adjacent $C_{i'}$ for which \ref{alg:distributed_two} did a step {\em after} $C_i$,
the variable $y_{i'}$ has already been set.
When it sets the packing variable $y_i$, 
it does so following \ref{alg:packing}:
it raises $y_i$ maximally subject to the packing constraint $A\tran y \le c$.

Some nodes will be executing the second phase of the algorithm 
(computing the $y_i$'s)
while some other nodes are still executing the first phase
(computing the $x_j$'s).
This is necessary because a given node cannot know when distant nodes are done 
computing $x$.

\begin{theorem}\label{thm:packing_time_graphs}
For \packing where each variable appears in at most two constraints ($\ratio = 2$)
there is a distributed 2-approximation algorithm running
in $O(\log m)$ rounds in expectation and with high probability, where
$m$ is the number of packing variables.
\end{theorem}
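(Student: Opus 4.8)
The plan is to establish two things: first, that \ref{alg:distributed_packing_2} computes exactly the packing vector $y$ produced by the sequential \ref{alg:packing} (so the earlier $\ratio$-approximation guarantee transfers), and second, that it does so in $O(\log m)$ rounds.

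For correctness, I would first note that the covering part of \ref{alg:distributed_packing_2} is just \ref{alg:distributed_two} specialized to \prob{Fractional Covering} with $\ratio=2$, so it computes a cover $x$ and, through the pairs $(t^R_i,t^S_i)$ and the relation ``$\prec$'' of Defn~\ref{def:relation_graphs}, records a valid poset $\calD$. The substance is to show that the order in which $\pack$ assigns the variables $y_i$ is the reverse of some total order $\Pi$ of $\calD$. This follows from the guard conditions of $\pack$: a variable $y_i$ is set only after (a) a step for $C_i$ has been done, (b) every adjacent covering constraint is satisfied, and (c) every $y_{i'}$ with $i\prec i'$ has already been set. Condition (c) is precisely the requirement that $y_i$ be assigned after all packing variables above it in $\calD$, so the sequence of assignments is a reverse linear extension of $\calD$; and when $y_i$ is set it is raised maximally subject to $A\tran y\le c$, exactly as in \ref{alg:packing}. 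By Lemma~\ref{thm:anyorder} the result is independent of which linear extension is used, so the distributed $y$ equals the sequential one, which by Lemma~\ref{thm:2approximation_ratio} together with weak duality is a $\ratio$-approximation (here $\ratio=2$); feasibility is maintained throughout since each assignment keeps $A\tran y\le c$.

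For the round bound, I would argue that the packing computation lags the covering computation by only $O(\log m)$ rounds. Let $R=O(\log m)$ be the number of rounds \ref{alg:distributed_two} takes (Thm~\ref{thm:distributed_two}). The crucial observation is that $i\prec i'$ forces $t^R_{i'}\ge t^R_i$, so $y_i$ depends only on packing variables of constraints done in the same or a later covering round. Two adjacent constraints done in the same round must lie in the same star (different stars of one round are vertex-disjoint, since a leaf picks a single edge and no vertex is both a leaf and a root), so all same-round dependencies of the edges of one star are internal to that star and are discharged in a single $\packstar$ pass (which processes the star in decreasing $t^S$ order). Hence the only dependencies that can cost extra rounds are those crossing covering-round boundaries, and each such boundary costs at most one additional round to propagate across nodes. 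A reverse induction on the round index $r$ then gives that every $y_i$ with $t^R_i=r$ is set by round $2R-r$: the base case $r=R$ holds because such constraints have no strictly-later dependencies and (b) is satisfied once all constraints are met, and the inductive step uses that all strictly-later dependencies are set by round $2R-(r+1)$, so the next $\packstar$ pass over the round-$r$ stars succeeds. Taking $r=1$ bounds the total by $2R-1=O(\log m)$.

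The main obstacle is this round bound for the packing phase: one must rule out the a priori possibility that long chains in $\calD$ force a correspondingly long sequence of packing rounds. The two facts that defuse this --- that within-round (within-star) dependencies collapse into a single $\packstar$ pass, and that the number of distinct covering rounds along any chain is only $R=O(\log m)$ --- are exactly what make the packing lag $O(\log m)$ rather than proportional to the (potentially much larger) chain length. Since the covering bound holds both in expectation and with high probability and the packing adds only a deterministic $O(R)$ overhead, the overall $O(\log m)$ bound holds in expectation and with high probability as well.
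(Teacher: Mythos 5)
Your proposal is correct and follows essentially the same route as the paper's proof: it establishes that the distributed assignment order is a (reverse) linear extension of the poset $\calD$, invokes Lemmas~\ref{thm:2approximation_ratio} and~\ref{thm:anyorder} with weak duality for the approximation ratio, and bounds the rounds by a reverse induction showing each $y_i$ with $t^R_i=r$ is set by round $2R-r$. Your explicit treatment of why same-round dependencies stay within a single star and are discharged in one $\packstar$ pass is exactly the detail the paper's ``straightforward induction'' relies on.
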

\begin{proof}
By Thm.~\ref{thm:distributed_two}, \ref{alg:distributed_two} computes a covering solution $x$ 
in $T=O(\log m)$ rounds in expectation and with high probability.
Then by a straightforward induction on $t$, within $T+t$ rounds, 
for every constraint $C_i$ for which \ref{alg:distributed_two} did a step in round $T-t$,
\ref{alg:distributed_packing_2} will have set the variable $y_i$.
Thus, \ref{alg:distributed_packing_2} computes the entire packing $y$
in $2T=O(\log m)$ rounds with in expectation and with high probability.

As \ref{alg:distributed_two} performs steps for covering constraints,
order the indices of those constraints by the time in which the constraints' steps were done,
breaking ties arbitrarily.  Let $\Pi$ be the resulting order.

As \ref{alg:distributed_packing_2} performs steps for packing variables,
order the indices of those variables by the time in which the variables were raised,
breaking ties arbitrarily.  Let $\Pi'$ be the reverse of this order.

Then the poset $\calD$ is the same as it would be if defined
by executing the sequential algorithm~\ref{alg:covering}
for the \prob{Fractional Covering} problem 
and considering the constraints in the order in which their indices occur in $\Pi$.
Also, the covering solution $x$ is the same as would
be computed by \ref{alg:covering}.

Likewise, $\Pi'$ is a total ordering of $\calD$,
and the packing solution $y$ is the same as would
be computed by \ref{alg:packing} using the order $\Pi'$.

By Lemmas~\ref{thm:2approximation_ratio} and~\ref{thm:anyorder},
and weak duality, the \prob{Fractional Packing} solution $y$ is 2-approximate. 
\qed
\end{proof}

The following corollary is a direct result of Lemma~\ref{thm:integral_maching}
and Thm.~\ref{thm:packing_time_graphs} and the fact that for 
this problem $|E| = O(|V|^2)$.
\begin{corollary}\label{thm:distributed_b-matching}
There is a distributed $2$-approximation algorithm 
for \bmatching on graphs 
running in $O(\log |V|)$ rounds
in expectation and with high probability.
\end{corollary}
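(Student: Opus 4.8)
The plan is to treat \bmatching on a graph as a special instance of \packing with $\ratio=2$ and then invoke the machinery of the previous section directly. First I would set up the reduction explicitly: for a graph $H=(V,E)$ with capacities $c\in\Zp^{|V|}$, the matching problem asks to maximize $w\cdot y$ over $y\in\Zp^{|E|}$ subject to $\sum_{e\in E(u)} y_e \le c_u$ for every vertex $u$. Reading this as a packing linear program, the constraint matrix $A$ is the vertex-edge incidence matrix, so $A\in\{0,1\}^{|V|\times|E|}$; the packing variables are indexed by edges ($m=|E|$) and the packing constraints by vertices ($n=|V|$). Since every edge of a graph is incident to exactly two vertices, each packing variable occurs in at most two constraints, i.e.\ $\ratio=2$, which is precisely the regime handled by Thm.~\ref{thm:packing_time_graphs}.

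Given this identification, the result should follow by combining three facts established earlier. First, Thm.~\ref{thm:packing_time_graphs} gives a distributed $2$-approximation for the fractional packing relaxation running in $O(\log m)=O(\log|E|)$ rounds in expectation and with high probability. Second, Lemma~\ref{thm:integral_maching} applies because here $A\in\{0,1\}^{|V|\times|E|}$ and $c\in\Zp^{|V|}$; it guarantees that the packing solution $y$ produced by \ref{alg:packing}, as emulated in distributed fashion by \ref{alg:distributed_packing_2}, already lies in $\Zp^{|E|}$. Hence no rounding step is needed: the $2$-approximate fractional packing solution is in fact a feasible integral $c$-matching of the same value, so it is a $2$-approximate \bmatching. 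Third, a graph satisfies $|E|=O(|V|^2)$, so $O(\log|E|)=O(\log|V|^2)=O(\log|V|)$, converting the round bound into the stated $O(\log|V|)$.

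The main thing to verify carefully, rather than a genuine obstacle, is that the distributed algorithm of Thm.~\ref{thm:packing_time_graphs} outputs exactly the packing vector $y$ to which the integrality argument of Lemma~\ref{thm:integral_maching} applies. This is already secured by the proof of Thm.~\ref{thm:packing_time_graphs}: the order $\Pi'$ in which \ref{alg:distributed_packing_2} raises the packing variables is a valid total order of the poset $\calD$, and by Lemma~\ref{thm:anyorder} the resulting $y$ coincides with the sequential solution of \ref{alg:packing}, which is the one Lemma~\ref{thm:integral_maching} analyzes. Thus integrality of $y$ is inherited by the distributed output, and combining the three facts above (together with weak duality, already used to establish the approximation ratio) yields the corollary with no additional work.
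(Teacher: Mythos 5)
Your proposal is correct and follows exactly the paper's intended argument: the corollary is stated there as a direct consequence of Thm.~\ref{thm:packing_time_graphs}, Lemma~\ref{thm:integral_maching}, and the bound $|E|=O(|V|^2)$, which are precisely the three facts you combine. Your additional check that the distributed output coincides with the sequential $y$ analyzed in Lemma~\ref{thm:integral_maching} is a worthwhile elaboration but not a departure from the paper's route.
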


\section{Fractional Packing with general $\ratio$}
\label{sec:distributed_pack_general}

\paragraph{Distributed model.}
\label{sec:model_hypergraphs}
Here we assume that the distributed network has a node $v_i$ for
each covering constraint $C_i$ (packing variable $y_i$), with edges from $v_i$ to
each node $v_{i'}$ if  $C_i$ and $C_{i'}$ share a covering variable $x_j$. \footnote
{The computation can easily be simulated on a network with nodes for covering variables or nodes for covering variables and covering constraints.}.
The total number of nodes in the network is $m$.
Note that in this model the role of nodes and edges is reversed as compared to the
model used in Section~\ref{sec:distributed_pack_two}.

\paragraph{Algorithm.}
Fix any \prob{Fractional Packing} instance and its \prob{Fractional Covering} dual:
\begin{center}
{\em maximize $w\cdot y$ subject to $y \in \Rp^m$ and $A\tran y \le c$},

{\em minimize $c\cdot x$ subject to $x \in \Rp^{n}$ and $Ax \ge w$.}
\end{center}

Recall \ref{alg:distributed_general},
which $\ratio$-approximates \prob{Submodular-cost Covering} in $O(\log^2 m)$ rounds.
\prob{Fractional Covering}  (with general $\ratio$) is a special case.
Thus, \ref{alg:distributed_general}
also $\ratio$-approximates \prob{Fractional Covering}
in $O(\log^2 m)$ rounds.

For this special case, make \ref{alg:distributed_general}
use the particular $\stepsize()$ function defined in \ref{alg:covering}
so that the specialization of \ref{alg:distributed_general} defined thusly
is also a distributed implementation of \ref{alg:covering}.

Following the approach in the previous two sections,
this section extends this specialization of \ref{alg:distributed_general} for \prob{Fractional Covering}
to compute a solution for \prob{Fractional Packing},
the dual problem, in $O(\log^2 m)$ rounds.

Similar to the $\ratio=2$ case in the previous section, 
\ref{alg:distributed_general} defines a poset $\calD$ 
of the indices of covering constraints;
the extended algorithm sets raises packing variables maximally,
in a distributed way consistent with some total ordering of $\calD$.
Here, the role of stars is substituted by components and the role of roots by leaders. 
With each step done to satisfy the covering constraints $C_i$, the 
algorithm records $(t^R_i, t^S_i)$, where $t^R_i$ is the round and $t^S_i$ is the within-the-component 
iteration in which the step was performed.
This defines a poset $\calD$ on the indices of covering constraints 
for \ref{alg:distributed_general} performs steps.  

\begin{figure*}[t]
\begin{alg}
\Ahead{\bf Distributed $\ratio$-approximation algorithm for Fractional Packing with general $\ratio$}
\Alabel{alg:distributed_packing_degree}

\Ain Graph $G=(V,E)$ representing a fractional packing problem instance.
\Aout Feasible $y$, ~$\ratio$-approximately minimizing $w \cdot y $.

\smallskip
\A Initialize $y \leftarrow 0$.
\A For each $i=1\dots m$ initialize $done_i \leftarrow \text{false}$.
\comment{this indicates if $y_i$ has been set to its final value}

\A Until each $y_i$ has been set ($done_i = \text{true}$) do:
\algbeg
\Along{Perform a phase of the $\ratio$-approximation algorithm for covering (\ref{alg:distributed_general}), recording $(t^R_i,t^S_i)$.}

\Along{For each node $v_\calK$ that was a leader at any previous phase, consider locally at $v_\calK$
all components that chose $v_\calK$ as a leader at any previous phase.
For each such component $\calK_r$ perform $\packcomponent(\calK_r)$.}
\medskip
\algend

\algsep

\Ahead{$\packcomponent(\text{component}~\calK_r)$:}
\A For each $i\in\calK_r$ in decreasing order of $t^S_i$:
\algbeg
\A If $\pack(i) = \text{NOT DONE}$ ~then~BREAK (stop the for loop).
\algend

\algsep

\Ahead{$\pack(i)$:}
\Along{If $C_i$ or any $C_{i'}$ that shares covering variables with $C_i$ is not yet satisfied return NOT DONE.}
\smallskip
\A If $t^R_i = 0$ then:
\algbeg
\A Set $y_i = 0$ and  $done_i = \text{true}$.
\A Return DONE.
\algend
\smallskip
\A If $done_{i'} = \text{false}$ for any $y_{i'}$ such that $i \prec {i'}$ then return NOT DONE.
\A Set $y_i \leftarrow \min_{j\in \cons(y_i)} \big((c_j - \sum_{i'} A_{i'j}y_{i'})/A_{ij})\big)$~and~
 $done_i \leftarrow \text{true}$.
\A Return DONE.
\end{alg}
\vspace{-2em}
\end{figure*}

The extended algorithm, 
\ref{alg:distributed_packing_degree},
runs the specialized version of \ref{alg:distributed_general} 
on the \prob{Fractional Covering} instance,
recording the poset $\calD$
(that is, recording $(t^R_i, t^S_i)$ for each covering constraint $C_i$ for which it performs a step).
Meanwhile, as it discovers the partial order $\calD$, it begins computing the
packing solution, raising each packing variable as soon as it can.
Specifically sets a given $y_i\in\calD$ once
(a) a step has been done for the covering constraint $C_i$,
(b) each adjacent covering constraint $C_{i'}$ is satisfied and 
(c) for each adjacent $C_{i'}$ for which a step was done after $C_i$,
the variable $y_{i'}$ has been set.

To implement this, the algorithm considers all components that have been done by leaders
in previous rounds.
For each component, the leader considers the component's packing variables $y_i$ in order of decreasing $t^S_i$.
When considering $y_i$ it checks if each $y_{i'}$ with $i \prec {i'}$ 
is set, and, if so, the algorithm sets $y_i$ 
and continues with the next component's packing variable (in order of decreasing $t^S_i$).

\begin{theorem}\label{thm:packing_time_hypergraphs}
For \packing where each variable appears in at most $\ratio$ constraints
there is a distributed $\ratio$-approximation algorithm
running in $O(\log^2 m)$ rounds in expectation and with high probability,
where $m$ is the number of packing variables.
\end{theorem}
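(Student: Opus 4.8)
The plan is to mirror the proof of Theorem~\ref{thm:packing_time_graphs} (the $\ratio=2$ case), replacing the covering engine \ref{alg:distributed_two} by the Linial--Saks-based covering algorithm \ref{alg:distributed_general} (so that \emph{stars} become Linial--Saks components and \emph{roots} become leaders), and replacing \ref{alg:distributed_packing_2} by \ref{alg:distributed_packing_degree}. As the text already observes, the specialization of \ref{alg:distributed_general} that uses the $\stepsize()$ of \ref{alg:covering} is a distributed implementation of the sequential covering algorithm \ref{alg:covering}, and \ref{alg:covering} is the engine whose induced poset $\calD$ and packing vector $y$ are analyzed by Lemmas~\ref{thm:2approximation_ratio} and~\ref{thm:anyorder}. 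So the argument splits cleanly into a correctness part (reducing to those two lemmas and weak duality) and a round-complexity part, and the only genuinely new work is re-establishing the structural facts about timestamps in the component/leader setting.

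For correctness I would first check that the recorded pairs $(t^R_i,t^S_i)$, together with ``$\prec$'' defined via their lexicographic order exactly as in Definition~\ref{def:relation_graphs}, yield a genuine partial order $\calD$ on the covering constraints for which a step was done. The key structural fact is that any two \emph{adjacent} covering constraints receive comparable timestamps: by the ``edges do not cross components'' property of the Linial--Saks decomposition used in the proof of Theorem~\ref{thm:distributed_general}, two constraints satisfied in the same phase but chosen by different leaders share no variable, so adjacent constraints done in one phase must lie in the same component and hence differ in $t^S$, while constraints done in different phases differ in $t^R$. This is the precise analog of the star-based observation in Section~\ref{sec:distributed_pack_two}. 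Given this, $x$ is exactly the cover \ref{alg:covering} produces when it considers constraints in the time order $\Pi$, and $\calD$ is the induced poset. The reverse order $\Pi'$ in which \ref{alg:distributed_packing_degree} sets variables is then a total order of $\calD$, and because each $y_i$ is raised maximally subject to $A\tran y \le c$ only after every neighbor $y_{i'}$ with $i\prec i'$ is already set (conditions enforced inside $\packcomponent$ and $\pack$), the computed $y$ equals the vector \ref{alg:packing} produces on $\Pi'$. By Lemma~\ref{thm:anyorder} this equals the canonical $y$, and by Lemma~\ref{thm:2approximation_ratio} we get $w\cdot y \ge \frac{1}{\ratio}\,c\cdot x$, so weak duality makes $y$ a $\ratio$-approximation.

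For the round bound, let $T_p = O(\log m)$ be the number of \emph{phases} \ref{alg:distributed_general} uses to compute $x$, each phase costing $O(\log m)$ rounds, for $O(\log^2 m)$ rounds of covering by Theorem~\ref{thm:distributed_general}. I would then argue by reverse induction on the phase index that the packing variables are all set within $2T_p$ phases. A variable $y_i$ can fire once (a) $C_i$ and all adjacent covering constraints are satisfied---true by the end of covering---and (b) every neighbor $y_{i'}$ with $i\prec i'$ is already set. Since $i\prec i'$ forces $i'$ to carry a strictly later timestamp, the variables with $t^R_i = T_p$ depend on no not-yet-set neighbor and fire immediately; inductively, those with $t^R_i = T_p - t$ are all set within $t$ further phases. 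Within a single component the leader processes its variables in decreasing $t^S_i$ order inside $\packcomponent$, so a whole within-component chain (all sharing one $t^R$) drains in a \emph{single} phase once its cross-component predecessors are done. Hence the packing drains in at most $T_p$ additional phases, giving $2T_p = O(\log m)$ phases and $O(\log^2 m)$ rounds overall, in expectation and with high probability.

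The hard part will be making the emulation airtight: showing that the locally triggered firing rule (set $y_i$ as soon as (a)--(b) hold) reproduces \emph{exactly} the numerical values of the global sequential \ref{alg:packing}, and that no deadlock can arise---that is, that the dependency ``wait for all later-timestamped adjacent neighbors'' is acyclic. Both reduce to $\calD$ being a partial order together with the monotonicity of $(t^R,t^S)$ along ``$\prec$'', which is why establishing the timestamp-comparability fact above is the crux. I would also take care that the round-complexity induction correctly respects the two-level timing (phase $t^R$ versus within-component step $t^S$), so that each loop iteration is charged with advancing a full level of $\calD$ rather than a single element, and that the $O(\log m)$-round cost of a phase (decomposition, gather, broadcast, and the distance-one communication needed to evaluate the $\min_{j\in\cons(y_i)}$ in $\pack$) is counted once per phase.
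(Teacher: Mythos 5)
Your proposal is correct and follows essentially the same route as the paper: the paper's own proof of Theorem~\ref{thm:packing_time_hypergraphs} simply says the argument is the same as for Theorem~\ref{thm:packing_time_graphs} with $T=O(\log^2 m)$, and your write-up is a faithful (and more explicit) expansion of that reduction, including the one genuinely new ingredient --- that adjacent constraints get comparable timestamps because the Linial--Saks components of a single phase share no variables. No gaps.
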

\begin{proof}
The proofs of correctness and running time 
are essentially the same as in the proofs of Thm.~\ref{thm:packing_time_graphs},
except that in this case $T=O(\log^2 m)$ in expectation and w.h.p.
(by Thm.~\ref{thm:distributed_general}), so the running time 
is $O(\log^2 m)$ .
\qed
\end{proof}

\vspace{10pt}
The following corollary is a direct result of Lemma~\ref{thm:integral_maching}
and Thm.~\ref{thm:packing_time_hypergraphs}.

\begin{corollary}\label{thm:distributed_b-matching_hypergraphs}
For \bmatching on hypergraphs,
there is a distributed $\ratio$-approximation algorithm running
in $O(\log^2 |E|)$ rounds in expectation and with high probability,
where $\ratio$ is the maximum hyperedge degree and $|E|$ is the number of hyperedges.
\end{corollary}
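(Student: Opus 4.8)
The plan is to observe that \bmatching on a hypergraph is exactly the integer-valued special case of \packing that the machinery of this section was built to solve, and then to combine Theorem~\ref{thm:packing_time_hypergraphs} with Lemma~\ref{thm:integral_maching}. No new algorithm or analysis is needed; the work lies in verifying that the hypergraph instance meets the hypotheses of both results and that the integrality guarantee of the lemma carries over to the vector actually returned by the distributed algorithm.

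First I would fix the encoding. Given a hypergraph $H(V,E)$ with integer vertex capacities $c\in\Zp^{|V|}$ and edge weights $w$, let $A$ be the hyperedge-vertex incidence matrix, so that $A\in\{0,1\}^{|E|\times|V|}$ with $A_{ev}=1$ exactly when $v\in e$. Then \bmatching is the \packing instance that maximizes $w\cdot y$ over $y\in\Rp^{|E|}$ subject to $A\tran y\le c$, whose constraint for a vertex $v$ reads $\sum_{e\ni v} y_e\le c_v$, together with the extra requirement $y\in\Zp^{|E|}$. In this encoding the number of packing variables is $m=|E|$, and the packing variable $y_e$ appears in exactly the $|e|$ constraints indexed by the vertices of $e$; hence $\ratio$, the maximum number of constraints containing any variable, equals the maximum hyperedge degree, as the statement requires.

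Next I would run \ref{alg:distributed_packing_degree} on this instance. By Theorem~\ref{thm:packing_time_hypergraphs} it returns a feasible \packing solution $y$ that is a $\ratio$-approximation, using $O(\log^2 m)=O(\log^2 |E|)$ rounds in expectation and with high probability. The only point left open is that \bmatching demands $y\in\Zp^{|E|}$, while the theorem by itself only delivers $y\in\Rp^{|E|}$. I would close this by invoking Lemma~\ref{thm:integral_maching}: since here $A\in\{0,1\}^{|E|\times|V|}$ and $c\in\Zp^{|V|}$, its hypotheses hold, so the packing vector $y$ defined in Section~\ref{sec:covering_packing} is automatically integral. Thus no rounding step is required, and the same $y$ is a feasible integral $\ratio$-approximate hypergraph $b$-matching computed in $O(\log^2 |E|)$ rounds.

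The one subtlety to nail down --- and the only real obstacle --- is that Lemma~\ref{thm:integral_maching} is stated for the specific vector $y$ produced by the sequential \ref{alg:packing}, whereas the corollary concerns the output of the distributed \ref{alg:distributed_packing_degree}. I would resolve this by appealing to the correctness argument already inside Theorem~\ref{thm:packing_time_hypergraphs}, which shows that the distributed run reproduces precisely the sequential solution for the total order $\Pi'$ of $\calD$ induced by the rounds, with order-independence supplied by Lemma~\ref{thm:anyorder}. Once the distributed output is identified with that sequential $y$, integrality is inherited verbatim and the corollary follows immediately.
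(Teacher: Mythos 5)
Your proposal is correct and follows exactly the paper's route: the paper derives this corollary directly from Lemma~\ref{thm:integral_maching} and Theorem~\ref{thm:packing_time_hypergraphs}, just as you do. Your additional care in verifying the $0/1$ encoding, the identification $m=|E|$ and $\ratio$ with the maximum hyperedge degree, and the fact that the distributed output coincides with the sequential $y$ (so integrality transfers) simply makes explicit what the paper leaves implicit.
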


\section*{Acknowledgements}
Thanks to two anonymous referees for their helpful comments.

\section*{Appendix}

Generate a $\ratio$-approximate primal-dual pair
for the greedy algorithm for \prob{Fractional Covering} (\ref{alg:covering} in Section~\ref{sec:covering_packing}) in some sense requires a tail-recursive approach.
The following example demonstrates this.
Consider (i)

\noindent
$\min\{x_1+x_2+x_3 : x_1+x_2\ge 1,~x_1 + x_3 \ge 5,~x \ge 0\}$.

\smallskip
\noindent
If the greedy algorithm (\ref{alg:covering}) does the constraints of (i) in either order
(choosing $\beta$ maximally), it gives a solution of cost 10.

The dual is
$\max\{y_{12} + 5 y_{13} : y_{12}+y_{13}\le 1,~y \ge 0\}$.
The only way to generate a dual solution of cost 5
is to set $y_{12}=0$ and $y_{13}=1$.

Now consider the covering problem (ii)

\noindent
$\min\{x_1+x_2+x_3 : x_1+x_2\ge 1,~x_1 + x_3 \ge 0,~x \ge 0\}$
(the right-hand-side of the second constraint is 0).

If the greedy algorithm does the constraints of (ii) in either order
(choosing $\beta$ maximally), it gives a solution of cost 2.

The dual is $\max\{y_{12} : y_{12}+y_{13}\le 1,~y \ge 0\}$.
The only way to generate a dual solution of cost 1
is to set $y_{12}=1$ and $y_{12}=0$
(the opposite of the $y$ for (i)).

Consider running \ref{alg:covering} on each problem,
and giving it the shared constraint $x_1+x_2\ge 1$ first.
This constraint (and the cost) are the same in both problems,
and the algorithm sets $x$ to satisfy the constraint in the same way for both problems.
A standard primal-dual approach would set the dual variable $y_{12}$ at this point,
as a function of the treatment of $x_1$ and $x_2$
(essentially in an {\em online} fashion, constraint by constraint). 
That can't work here:
in order to get a 2-approximate dual solution,
the dual variable $y_{12}$ has to be set differently
in the two instances: to 0 for the first instance, and to 1 for the second instance.


\bibliographystyle{plain}
\bibliography{merged}

\end{document}
